\documentclass[11pt]{article}

\usepackage[utf8]{inputenc}
\usepackage[T1]{fontenc}
\usepackage{mathptmx}
\usepackage[margin=1in]{geometry}
\usepackage{amsmath,amssymb,amsthm,mathtools}
\usepackage{graphicx}
\usepackage{setspace}
\usepackage{booktabs}
\usepackage[backend=biber, style=authoryear, maxcitenames=4, maxbibnames=99]{biblatex}
\usepackage[hidelinks]{hyperref}
\usepackage{titlesec}
\usepackage{enumitem}
\usepackage{microtype}

\hypersetup{
    colorlinks=true,
    linkcolor=blue,
    filecolor=magenta, 
    citecolor=blue,
    urlcolor=blue,
    pdftitle={Overleaf Example},
    pdfpagemode=FullScreen,
}

\titleformat{\section}{\normalfont\large\bfseries}{\thesection.}{0.5em}{}
\titleformat{\subsection}{\normalfont\normalsize\bfseries}{\thesubsection}{0.5em}{}

\setlist[itemize]{leftmargin=2em,itemsep=0pt,topsep=2pt}
\setlist[enumerate]{leftmargin=2em,itemsep=0pt,topsep=2pt}

\newtheorem{proposition}{Proposition}
\newtheorem{lemma}{Lemma}
\newtheorem{corollary}{Corollary}

\newtheorem{assumption}{Assumption}

\newcommand{\E}{\mathbb{E}}
\newcommand{\R}{\mathbb{R}}
\newcommand{\pdet}{p_{\mathrm{det}}}

\title{\large\bfseries Buying the Right to Monitor:\\
Editorial Design in AI-Assisted Peer Review}
\author{Zaruhi Hakobyan\thanks{University of Luxembourg, Email: zaruhi.hakobyan@uni.lu}}
\date{\today}

\begin{document}
\maketitle

\begin{abstract}
\noindent Generative AI acts as a disruptive technological shock to evaluative organizations. In academic peer review, it enters both sides of the market: authors use AI to polish submissions, and reviewers use it to generate plausible reports without exerting evaluative effort. We develop a three-sided equilibrium model to analyze this dual adoption and derive a counterintuitive managerial implication for journal policy. We show that when AI capability crosses a critical threshold, reviewer effort collapses discontinuously. This transition creates a welfare misalignment: authors benefit from a weakened ``rat race,'' while editors suffer from degraded signal informativeness. Characterizing the editor's optimal constrained response, we identify a strict policy reversal. Before the AI transition, editors should tighten acceptance standards to curb rent-dissipating author polishing. After the transition, conventional intuition fails: editors must loosen acceptance standards while investing in AI detection, because further tightening only amplifies dissipative polishing without improving sorting. We prove analytically that this sign reversal is a structural consequence of the reviewer effort collapse under log-concave quality distributions. Ultimately, addressing AI in evaluative systems requires treating monitoring and loosened selectivity as complementary design instruments.

\vspace{0.5em}
\noindent \textit{Keywords:} generative AI, peer review, reviewer incentives, rat race, academic publishing, organizational design, information economics.
\end{abstract}

\section{Introduction}
\label{sec:intro}

Academic peer review is an evaluative organization built on a fragile incentive structure. Authors compete for scarce publication slots by investing in how their work is presented, while reviewers---typically weakly compensated and time-constrained---supply judgment on behalf of journals. Generative AI changes both sides of this system. Authors can use large language models to polish manuscripts, improve exposition, and reduce language frictions. Reviewers can use the same tools to generate plausible-looking reports without conducting a substantive evaluation.

These two uses of AI are often treated as a single problem: the diffusion of a new technology into academic publishing. This paper argues that they are economically distinct. Author-side AI primarily lowers the cost of producing polished and persuasive submissions. Reviewer-side AI lowers the cost of appearing to evaluate. When these two margins interact, the equilibrium response of journals can differ sharply from standard editorial intuition.

Recent evidence suggests that both margins are already empirically relevant. On the author side, \cite{Liu2025Adoption} document rapid growth in AI-assisted scientific writing, with adoption rising approximately 400\% in non-English-speaking countries compared to 183\% in English-speaking countries over 2021--2024. On the reviewer side, \cite{Liang2024Stanford} estimate that a meaningful fraction of recent AI-conference reviews show linguistic signatures consistent with AI generation, and document that AI-using reviewers engage less in author rebuttals.

The central puzzle is therefore not simply whether AI improves or harms peer review. If AI makes review reports easier to produce, journals may appear to gain from a larger pool of willing reviewers. But the marginal reviewer may participate precisely because AI makes low-effort reviewing feasible. Thus, AI can increase apparent participation while reducing the average amount of real evaluation. The journal receives more reports, but less information.

We develop a three-sided model of this interaction. Authors choose how much to invest in manuscript polishing. Potential reviewers, heterogeneous in conscientiousness, decide whether to accept review invitations and whether to exert effort or submit an AI-assisted report. The editor commits to a policy consisting of the number of review invitations, the acceptance rate, and the intensity of detection. Generative AI is summarized by a capability parameter that determines how plausible AI-generated reports appear to the editor.

The model delivers three results. First, author polishing takes the form of a rat race. In symmetric equilibrium, polishing improves an author's relative signal but does not change the aggregate acceptance rate. Authors therefore spend resources to compete against one another, generating private incentives but social dissipation. The intensity of this rat race depends on reviewer effort: when reviewers read carefully, polish has more leverage; when reviewers shirk, polish becomes less effective.

Second, reviewer behavior exhibits a participation transition. Below a critical AI capability, shirking is unattractive, and only sufficiently conscientious reviewers accept invitations. Once AI becomes capable enough to generate plausible reports, shirking becomes privately attractive. The reviewer pool expands, but average evaluative effort falls. This distinction between participation and effort composition is central: more reviewers may accept, but fewer may actually evaluate.

Third, the transition creates a welfare misalignment. Authors benefit from reviewer shirking because the polishing rat race weakens. Editors lose because review signals become less informative. The same technological change can therefore make authors better off while making journals worse off.

We then characterize the editor's constrained response. The editor chooses policy subject to an author-welfare constraint: reforms must not make authors worse off than under the decentralized AI equilibrium. The optimal response exhibits a sign reversal. Before AI-assisted reviewing becomes prevalent, the editor should tighten acceptance, because greater selectivity improves sorting and reduces wasteful polishing. After the reviewer-effort transition, the optimal response reverses: the editor should loosen acceptance and introduce detection. Detection restores some effective reviewer effort, but it also increases the return to author polishing and thereby intensifies the rat race. To make detection acceptable to authors, the editor compensates them through a higher acceptance rate. \emph{The editor buys the right to monitor reviewers by loosening selectivity.}

The main managerial implication is counterintuitive. When AI-assisted reviewing becomes prevalent, journals should not automatically respond by becoming more selective. Tighter selectivity may amplify dissipative author competition without restoring the informativeness of review. A better response combines monitoring of reviewer shirking with less stringent acceptance.

The contribution is fourfold. First, we provide a tractable three-sided equilibrium framework for studying the joint response of authors, reviewers, and editors to generative AI. Second, we derive a counterintuitive policy implication: the direction of optimal editorial adjustment reverses once AI-assisted reviewing becomes prevalent. Third, we establish analytically, under log-concave quality distributions and quadratic polish cost, that the sign-flipped author-welfare condition driving the reversal follows from the standard rat-race condition combined with a measurable sharpness condition on the participation transition; this converts what was previously a separate assumption into a structural consequence and isolates the precise way in which the reviewer-side phase transition propagates into editorial design. Fourth, we identify an organizational-design principle that extends beyond peer review: when a technology simultaneously lowers the cost of costly signaling and reduces the informativeness of evaluation, the optimal institutional response combines monitoring with \emph{less} stringent selection.

The paper proceeds as follows. Section \ref{sec:litreview} situates the contribution. Section \ref{sec:model} presents the model. Section \ref{sec:participation} analyzes reviewer participation. Section \ref{sec:ratrace} studies author polishing. Section \ref{sec:editor} characterizes the editor's decentralized response. Section \ref{sec:policy} derives the sign reversal. Section \ref{sec:compstats} presents comparative statics across calibrations. Section \ref{sec:managerial} translates results into managerial implications. Section \ref{sec:robustness} discusses robustness. Section \ref{sec:conclusion} concludes.

\section{Related Literature}
\label{sec:litreview}

This paper relates to four strands of research: rat races and rent dissipation, reviewer incentives in weakly compensated evaluation, AI-assisted scientific writing and peer review, and information design with monitoring.

\paragraph{Rat races, contests, and rent dissipation.} Our model of author polishing builds on classic work on rat races and socially dissipative competition. \cite{Akerlof1976Rat} and \cite{Landers1996Rat} study environments in which agents overinvest in visible signals because relative performance matters. The contest literature similarly emphasizes that individually rational effort can be socially wasteful when agents compete for fixed prizes \cite{Tullock1980Rent, Konrad2009Contests}. The author side of our model follows this logic: manuscript polishing raises an author's relative chance of acceptance but does not change the aggregate number of accepted papers. In symmetric equilibrium, polishing is therefore rent-dissipating. Our contribution is to make the return to this rat race depend on evaluator behavior. When reviewers exert effort, polished presentation affects informative signals; when reviewers shirk, polish has less leverage. Thus, evaluator effort determines the intensity of the author contest---a feature absent from standard contest models, where the technology mapping effort into success is fixed.

\paragraph{Reviewer incentives and weakly compensated evaluation.} A second strand studies peer review and other forms of weakly compensated evaluation. Reviewers often receive little direct monetary compensation, so their effort depends on professional norms, reputational incentives, institutional design, and the perceived cost of reviewing. \cite{Chetty2014} show experimentally that referee behavior responds to institutional incentives even in the absence of conventional market wages. Generative AI changes this margin by lowering the cost of producing a review-like output. Our paper adds a new reviewer incentive margin: the ability to produce a plausible report without substantive evaluation. AI changes not only the cost of reviewing but also the outside option of shirking while appearing to comply, separating participation from effort composition---a distinction central to the model's phase transition and to the welfare conflict between authors and editors.

\paragraph{AI-assisted scientific writing and peer review.} A growing literature studies how generative AI affects scientific writing and reviewing. \cite{Liu2025Adoption} provide large-scale evidence that AI-assisted writing has grown rapidly in biomedical publishing, especially among non-native English-speaking and less-established researchers, with a modest associated productivity gain consistent with our modeling choice to treat author AI primarily as polish rather than latent-quality enhancement. \cite{Lepp2025Bias} show that language-related biases in scientific publishing persist even when AI tools are widely available, suggesting linguistic frictions are not fully eliminated by AI adoption. On the reviewer side, \cite{Liang2024Stanford} provide corpus-level evidence of AI adoption in academic conference reviews and document that AI-using reviewers engage less in author rebuttals---consistent with our treatment of AI-assisted reviewing as an extensive-margin effort collapse. Our paper differs from this literature by modeling AI-assisted reviewing as an endogenous effort choice in equilibrium, rather than asking only whether AI-generated reviews are useful or detectable.

\paragraph{Information design and monitoring in organizations.} The editor's problem connects to information design and monitoring in organizations. \cite{BergemannMorris2019} provide a general framework in which a designer chooses information structures to influence equilibrium behavior. \cite{Lu2024Peer} study mechanisms for eliciting informative text evaluations in the presence of large language models. \cite{Frederiksen2020} and \cite{Cai2023} provide related frameworks for monitoring and effort in organizations. Our setting differs from this literature in two ways. First, the editor does not control only the information structure; she also faces strategic responses from authors and reviewers. Second, the same technology affects both sides of the market: it lowers the cost of author polish and lowers the cost of reviewer shirking. The editor's optimal policy must therefore account for both sorting quality and rent dissipation. This interaction generates the paper's main result: the direction of optimal editorial reform reverses after reviewer AI becomes prevalent. To our knowledge, this sign reversal has not been identified in prior work on peer review, AI adoption, or information design.

\section{Model}
\label{sec:model}

\paragraph{Authors.} A unit mass of authors submits to a journal. Each author will eventually have latent quality $\theta \sim F$ with continuous density $f$ on $[\underline\theta, \overline\theta]$. Each author chooses polishing intensity $a \in [0, \infty)$ at cost $c_A(a)$ \emph{before} $\theta$ is realized. This captures the timing of real manuscript preparation, in which presentation-level investments---language editing, figure polish, formatting---are made during drafting, prior to the author's full certainty about how referees will assess latent quality. In particular, structural formatting, language editing, and AI-assisted drafting are sunk costs incurred prior to submission, long before peer review reveals the paper's standing relative to the competition. Polishing affects presentation without altering latent quality. Publication value is $V > 0$.

\begin{assumption}\label{ass:authors}
(i) $f$ is continuous, strictly positive, and log-concave on $[\underline\theta, \overline\theta]$. (ii) $c_A : [0, \infty) \to \R_+$ is twice continuously differentiable, strictly convex, satisfies $c_A(0) = c_A'(0) = 0$, and $c_A'(a) \to \infty$ as $a \to \infty$.
\end{assumption}

Log-concavity of $f$ is a standard assumption satisfied by most quality distributions of interest (uniform, normal, log-normal with parameter restrictions, beta with shape parameters $\geq 1$). It guarantees that the aggregate signal density $h_m$ is also log-concave (since convolutions preserve log-concavity), which delivers monotone comparative statics in the editor's problem and allows us to derive analytically the sign-flip condition that drives our central reform result (Section \ref{sec:policy}).

Given editorial policy $\pi$, an author's ex-ante payoff is
\begin{equation}
U_A(a; \pi) = V \cdot \E_\theta[\Pr(\text{accepted} \mid \theta, a; \pi)] - c_A(a).
\label{eq:author-payoff}
\end{equation}
This ex-ante specification yields a symmetric-polish equilibrium in which $a$ does not depend on $\theta$; allowing polish to condition on privately observed $\theta$ would produce a separating equilibrium with richer sorting implications, which we leave to future work.

\paragraph{Reviewers.} There is a unit mass of potential reviewers, each with conscientiousness type $t \in [0, 1]$ drawn from $G$ with density $g$. Higher $t$ means lower cost of exerting effort. Reviewers decide whether to accept an invitation before observing the paper; conditional on accepting, they choose effort $e \in \{0, 1\}$ (read and evaluate vs.~produce AI report). Declining yields zero.

Generative AI capability $\gamma \in [0, 1]$ affects the appearance of a shirking report:
\begin{equation}
\alpha(e, \gamma) = e + \gamma(1 - e).
\label{eq:appearance}
\end{equation}
The reviewer receives an appearance reward $\psi_\alpha > 0$. Accepting has baseline payoff $R \in \R$. Real effort costs $c_R(t)$. The editor can choose detection $\pdet \in [0, 1)$; detected shirking reports receive no appearance reward and incur penalty $\ell \geq 0$. Payoffs:
\begin{align}
U_R^{\text{work}}(t) &= R + \psi_\alpha - c_R(t), \label{eq:work-payoff}\\
U_R^{\text{shirk}}(\gamma, \pdet) &= R + \psi_\alpha \gamma (1 - \pdet) - \ell \pdet, \label{eq:shirk-payoff}\\
U_R^{\text{decline}} &= 0. \nonumber
\end{align}

\begin{assumption}\label{ass:reviewers}
(i) $G$ has continuous, strictly positive density $g$ on $[0, 1]$. (ii) $c_R : [0, 1] \to \R_+$ is continuous and strictly decreasing. (iii) $R < 0 < R + \psi_\alpha$ and $c_R(1) < R + \psi_\alpha < c_R(0)$.
\end{assumption}

The parameter $R$ captures the net baseline cost of accepting a review invitation before accounting for any appearance reward or effort cost: it represents time commitment, deadline obligations, and system-management overhead that accepting imposes. Assumption~\ref{ass:reviewers}(iii) requires $R < 0$ so that, absent the appearance reward, accepting is net-costly—consistent with the standard modeling of voluntary unpaid evaluation tasks. The condition $R + \psi_\alpha > 0$ then ensures that effortful reviewing is nonetheless worthwhile for sufficiently conscientious types.

\paragraph{Signals.} Reviewer $i$'s report is
\begin{equation}
r_i(\theta, a, e_i) = e_i (\theta + \beta a) + \nu_i(e_i),
\label{eq:single-signal}
\end{equation}
with $\beta > 0$. When $e_i = 1$ the report is informative; when $e_i = 0$ it is pure noise.

\begin{assumption}\label{ass:signals}
$\nu_i(e_i) \sim \mathcal{N}(0, \sigma_e^2)$ if $e_i = 1$ and $\mathcal{N}(0, \sigma_s^2)$ if $e_i = 0$, with $\sigma_s > \sigma_e > 0$. Noises are independent across reviewers.
\end{assumption}

If $N$ reviewers are invited and fraction $m$ of the accepting pool exerts effort, the aggregate signal $\bar s = N^{-1} \sum_i r_i$ satisfies, taking the effort composition as deterministic (an approximation exact in the large-$N$ limit and accurate for $N \geq 2$),
\begin{equation}
\bar s \mid \theta \sim \mathcal{N}\!\left( m(\theta + \beta a), \, \frac{\Sigma^2(m)}{N} \right), \quad \Sigma^2(m) = m \sigma_e^2 + (1 - m) \sigma_s^2.
\label{eq:aggregate-signal}
\end{equation}

Detection discards each shirking report with probability $\pdet$. We treat detection as a strictly \emph{ex-post screen}: reviewer participation and effort decisions in Section \ref{sec:participation} are made before the editor implements detection, and detection acts only on submitted reports without altering the reviewer's incentive calculus. Equivalently, reviewers face $\pdet = 0$ when deciding whether to accept and how much effort to exert, while the editor applies the screen $\pdet > 0$ to the resulting report pool.\footnote{An alternative specification in which $\pdet$ enters the reviewer's payoff (so that reviewers anticipate detection and adjust effort) would generate an additional behavioral channel through which detection affects $m$. We focus on the ex-post specification because it isolates the editorial-design mechanism that drives our central result and matches the institutional reality that journal-side detection is implemented after reviewers have submitted.} Under this convention, two consequences follow. First, the \emph{effective effort share} among retained reports rises:
\begin{equation}
M(m, \pdet) = \frac{m}{m + (1 - m)(1 - \pdet)}.
\label{eq:effective-effort}
\end{equation}
Second, the \emph{retained sample size} falls: only $N_{\mathrm{ret}}(m, \pdet) = N[m + (1-m)(1-\pdet)] \leq N$ reports survive, with strict inequality whenever $\pdet > 0$ and $m < 1$. The aggregate retained signal satisfies
\begin{equation}
\bar s_{\mathrm{ret}} \mid \theta \sim \mathcal{N}\!\left( M(\theta + \beta a), \, \frac{\Sigma^2(M)}{N_{\mathrm{ret}}(m, \pdet)} \right).
\label{eq:aggregate-signal-detection}
\end{equation}
Detection therefore trades composition quality (higher $M$) against sample size (lower $N_{\mathrm{ret}}$); the net effect on signal precision is positive whenever $\sigma_s > \sigma_e$ strictly, but the sample-size effect attenuates the marginal value of detection and enters the editor's optimization directly.

\begin{assumption}\label{ass:marginal}
For each $(K, N)$, the function $\Lambda(m, K, N, a) = m h_m(\tau_K; a)$ is continuous in $(m, a)$ and weakly increasing in $m$. The polishing cost $c_A$ is strictly convex (Assumption \ref{ass:authors}), which is sufficient for a unique interior solution to the equation $c_A'(a) = V \beta \Lambda(m, K, N, a)$.
\end{assumption}

\paragraph{Editor.} The editor commits to $\pi = (N, K, \pdet)$ with $N \in \mathbb{N}$, $K \in (0, 1)$, and $\pdet \in [0, 1)$, and accepts the top $K$ fraction of papers by aggregate signal. Per-paper invitation cost is $\epsilon N$; detection cost is $D(\pdet)$.

\begin{assumption}\label{ass:detection}
$D : [0, 1) \to \R_+$ is twice continuously differentiable, strictly convex, with $D(0) = D'(0) = 0$ and $D'(\pdet) \to \infty$ as $\pdet \to 1$.
\end{assumption}

The editor's payoff is
\begin{equation}
U_E(\pi) = \E[\theta \mid \text{accepted}; \pi] - \epsilon N - D(\pdet).
\label{eq:editor-payoff}
\end{equation}

\paragraph{Timing and equilibrium.} The editor announces $\pi$; reviewers decide participation and effort; authors observe $\pi$ and $m$ and choose polish; signals realize and the editor applies detection and accepts. The equilibrium concept is subgame-perfect Nash with symmetric author strategies.

\section{Reviewer Participation}
\label{sec:participation}

The reviewer's problem is a three-way choice among decline, shirk, and work. As generative AI capability $\gamma$ rises, the payoff from shirking rises, and at a critical threshold shirking switches from dominated-by-declining to weakly-dominating-declining. Above this threshold, low-conscientiousness reviewers who previously declined now accept-and-shirk; the reviewer pool expands, but average effort falls.

\begin{lemma}\label{lem:reviewer}
Under Assumption \ref{ass:reviewers}, define $S(\gamma, \pdet) = R + \psi_\alpha \gamma (1 - \pdet) - \ell \pdet$ and
\[
\widetilde t(\gamma, \pdet) = c_R^{-1}\!\left( \psi_\alpha[1 - \gamma(1 - \pdet)] + \ell \pdet \right),
\]
with the convention $c_R^{-1}(x) = 0$ if $x \geq c_R(0)$ and $c_R^{-1}(x) = 1$ if $x \leq c_R(1)$. If $S(\gamma, \pdet) < 0$: shirking is dominated by declining; only types with $t \geq c_R^{-1}(R + \psi_\alpha) \equiv t_0$ accept and work. If $S(\gamma, \pdet) \geq 0$: all types accept; types $t \geq \widetilde t(\gamma, \pdet)$ work and the rest shirk.
\end{lemma}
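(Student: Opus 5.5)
The argument is a direct pairwise comparison of the three payoffs in \eqref{eq:work-payoff}--\eqref{eq:shirk-payoff}, type by type. Three facts drive it. The shirk payoff $S(\gamma,\pdet)$ does not depend on $t$. The work payoff $R+\psi_\alpha-c_R(t)$ is strictly increasing in $t$, because $c_R$ is strictly decreasing by Assumption~\ref{ass:reviewers}(ii). Declining yields $0$. Each type's choice is therefore governed by two cutoffs: work versus decline, and work versus shirk. Throughout, I treat $(\gamma,\pdet)$ as the parameters that enter the reviewer's calculation. Under the ex-post convention this means $\pdet=0$ in practice, but the lemma is stated for general $\pdet$, so I will keep it.

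\emph{Work versus decline.} Working beats declining iff $c_R(t)\le R+\psi_\alpha$. Assumption~\ref{ass:reviewers}(iii) gives $c_R(1)<R+\psi_\alpha<c_R(0)$. Continuity and strict monotonicity of $c_R$ then supply, by the intermediate value theorem, a unique $t_0\in(0,1)$ with $c_R(t_0)=R+\psi_\alpha$. Working beats declining exactly for $t\ge t_0$.

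\emph{Work versus shirk.} Working beats shirking iff $R+\psi_\alpha-c_R(t)\ge R+\psi_\alpha\gamma(1-\pdet)-\ell\pdet$. This rearranges to
\[
c_R(t)\le \psi_\alpha[1-\gamma(1-\pdet)]+\ell\pdet .
\]
Since $c_R$ is strictly decreasing, this holds iff $t\ge \widetilde t(\gamma,\pdet)$. The convention for $c_R^{-1}$ covers right-hand sides outside $[c_R(1),c_R(0)]$: a value at or above $c_R(0)$ means every type prefers work, so the cutoff is $0$; a value at or below $c_R(1)$ means the cutoff is $1$.

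\emph{Case $S<0$.} Shirking is strictly worse than declining for every type. Each type then chooses between work and decline only, so exactly the types $t\ge t_0$ accept, and they all work.

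\emph{Case $S\ge 0$.} Shirking weakly dominates declining for every type, so every type accepts, with indifference at $S=0$ resolved in favour of accepting. Among acceptors, type $t$ works iff $t\ge\widetilde t$ and shirks otherwise.

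The one step that needs care is checking that the two cases are consistent: a type below $t_0$ must never find working strictly better than shirking when $S\ge 0$. This holds because $S\ge0$ means
\[
\psi_\alpha[1-\gamma(1-\pdet)]+\ell\pdet \le R+\psi_\alpha .
\]
Hence $\widetilde t\ge t_0$. The work/shirk cutoff therefore lies weakly above the work/decline cutoff, so no type that would decline when shirking is unavailable ends up working.

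Tie-breaking at the boundaries ($t=t_0$, $t=\widetilde t$, and $S=0$) involves a measure-zero set of types because $G$ has a density. These ties are resolved in favour of accepting and working, matching the weak inequalities in the statement.
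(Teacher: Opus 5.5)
Your proof is correct and follows the paper's argument essentially step for step: the same pairwise payoff comparisons, the same split on the sign of $S(\gamma,\pdet)$, and the same appeal to the density of $G$ for boundary ties. Your extra check that $\widetilde t \ge t_0$ when $S \ge 0$ is correct but not needed, since declining is weakly dominated in that case. One small correction to your last paragraph: the tie at $S=0$ is not a measure-zero set of types, because at $\gamma=\gamma_1(\pdet)$ every type below $\widetilde t$ is indifferent between shirking and declining, so that tie is settled by your explicit convention of resolving it in favour of accepting (which matches the weak inequality in the statement), not by the density of $G$.
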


\begin{proof} See Appendix \ref{app:reviewer-proof}. \end{proof}

The critical AI capability at which shirking becomes weakly profitable is $\gamma_1 = -R/\psi_\alpha$, with $\gamma_1 \in (0, 1)$ under Assumption \ref{ass:reviewers}.

\begin{proposition}[Participation phase transition]\label{prop:participation}
At $\pdet = 0$:
\begin{enumerate}[label=(\alph*)]
\item For $\gamma < \gamma_1$: pool mass $\mu = 1 - G(t_0)$ and effort rate $m = 1$.
\item For $\gamma \geq \gamma_1$: $\mu = 1$ and $m(\gamma) = 1 - G(\widetilde t(\gamma, 0)) < 1$.
\item At $\gamma = \gamma_1$, $\mu$ jumps up by $G(t_0) > 0$ and $m$ drops by $G(\widetilde t(\gamma_1, 0)) > 0$.
\end{enumerate}
\end{proposition}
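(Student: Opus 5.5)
The proof specializes Lemma~\ref{lem:reviewer} to $\pdet = 0$ and reads off the pool mass $\mu$ and the effort share $m$ in each regime. At $\pdet = 0$ we have $S(\gamma, 0) = R + \psi_\alpha \gamma$, which is strictly increasing in $\gamma$. So $S(\gamma,0) < 0$ if and only if $\gamma < -R/\psi_\alpha = \gamma_1$, and $S(\gamma,0) \geq 0$ if and only if $\gamma \geq \gamma_1$. The fact that $\gamma_1 \in (0,1)$ follows directly from $R < 0 < R + \psi_\alpha$ in Assumption~\ref{ass:reviewers}(iii).

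For part (a), with $\gamma < \gamma_1$, Lemma~\ref{lem:reviewer} says that exactly the types $t \geq t_0$ accept, and all of them work. Hence $\mu = 1 - G(t_0)$. Since $m$ is the effort share within the accepting pool, $m = 1$. For part (b), with $\gamma \geq \gamma_1$, all types accept, so $\mu = 1$. The workers are the types $t \geq \widetilde t(\gamma,0)$, which gives $m(\gamma) = 1 - G(\widetilde t(\gamma,0))$. At the knife-edge $S = 0$, a reviewer is indifferent between declining and shirking; I will simply use the tie-breaking in favour of accepting that is already built into the lemma.

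The only real work is the strict inequalities, which requires handling the boundary conventions for $c_R^{-1}$. First, $t_0 > 0$. Assumption~\ref{ass:reviewers}(iii) gives $c_R(1) < R + \psi_\alpha < c_R(0)$. Because $c_R$ is continuous and strictly decreasing, $c_R^{-1}(R+\psi_\alpha)$ is an interior point of $(0,1)$, so $t_0 \in (0,1)$. Since $g > 0$ on $[0,1]$, this gives $G(t_0) > 0$. Second, $\widetilde t(\gamma,0) > 0$ for every $\gamma \geq \gamma_1$. The argument of $c_R^{-1}$ is $\psi_\alpha(1-\gamma)$, and it satisfies
\[
\psi_\alpha(1-\gamma) \leq \psi_\alpha(1-\gamma_1) = \psi_\alpha + R < c_R(0).
\]
So the convention setting $c_R^{-1}$ to $0$ never applies. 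Either the argument lies in $(c_R(1), c_R(0))$, giving an interior preimage, or it is at most $c_R(1)$, giving $\widetilde t = 1$. In both cases $\widetilde t(\gamma,0) > 0$, hence $G(\widetilde t(\gamma,0)) > 0$ and $m(\gamma) < 1$.

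For part (c), I compare the two regimes at $\gamma_1$. For $\gamma$ just below $\gamma_1$ we have $\mu = 1 - G(t_0)$, while at $\gamma_1$ we have $\mu = 1$, so $\mu$ jumps up by $G(t_0) > 0$. Likewise $m$ falls from $1$ to $1 - G(\widetilde t(\gamma_1,0))$, a drop of $G(\widetilde t(\gamma_1,0)) > 0$ by the previous paragraph. As a consistency check, $\widetilde t(\gamma_1,0) = c_R^{-1}(\psi_\alpha + R) = t_0$. So the new shirkers are exactly the types who declined before the transition, and the two jumps have the same size.
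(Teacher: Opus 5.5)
Your proposal is correct and follows the paper's own argument. You specialize Lemma~\ref{lem:reviewer} to $\pdet = 0$, split on the sign of $S(\gamma,0) = R + \psi_\alpha\gamma$, read off $\mu$ and $m$, and get the strict jumps from $t_0 > 0$ and $\widetilde t(\gamma,0) > 0$. You also fill in details the paper leaves implicit: the check that $\psi_\alpha(1-\gamma) \le \psi_\alpha + R < c_R(0)$ (so the $c_R^{-1} = 0$ convention never binds), the identity $\widetilde t(\gamma_1,0) = t_0$, and your use of the lemma's built-in tie-break at $S = 0$, which is needed because there a positive mass of types is indifferent, so the paper's measure-zero tie-breaking remark does not apply.
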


\begin{proof} See Appendix \ref{app:participation-proof}. \end{proof}

AI changes \emph{who} reviews and \emph{how much} real effort they exert. This discontinuous shift is the engine of the downstream results.

A note on how these two margins enter the model. Pool mass $\mu$ determines whether the editor can staff a paper with $N$ reviewers: as long as $\mu N \geq 1$ (trivially satisfied in our setting), staffing is feasible and $\mu$ itself does not enter the editor's payoff. Effort composition $m$ enters directly through the aggregate signal \eqref{eq:aggregate-signal} and the selection quality $Q(N, K, M)$. The narrative tension ``more reviewers accept but fewer exert effort'' is therefore a tension between \emph{participation} and \emph{composition}; the formal welfare consequences operate through composition ($m$, or $M$ under detection), while the participation margin matters for the qualitative description of who reviews in each regime and for the shadow pool of AI-willing reviewers the editor draws from.

\section{Author Rat Race}
\label{sec:ratrace}

Fix reviewer effort rate $m \in (0, 1]$. Authors take $m$ as given. The private return to polish scales with $m\beta$: polish affects only effortful reviewers' signals, so when reviewers shirk more, polish has less leverage.

\begin{proposition}[Author rat race]\label{prop:ratrace}
Under Assumptions \ref{ass:authors}, \ref{ass:signals}, and \ref{ass:marginal}, for any $m \in (0, 1]$ there exists a unique symmetric equilibrium polish $a^*(m)$ satisfying
\begin{equation}
c_A'(a^*(m)) = V \beta m \, h_m(\tau_K; a^*(m)),
\label{eq:author-foc}
\end{equation}
weakly increasing in $m$. Rat-race dissipation $c_A(a^*(m))$ is weakly increasing in $m$.
\end{proposition}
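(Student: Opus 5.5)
The plan is to set up a single author's best response to a symmetric profile, derive the first-order condition \eqref{eq:author-foc}, and then get existence, uniqueness and monotonicity from Assumption~\ref{ass:marginal}.

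\textbf{Step 1: the deviator's acceptance probability.} Fix $m\in(0,1]$ and suppose all other authors choose polish $\bar a$. Then the population of aggregate signals has density $h_m(\cdot;\bar a)$, the density of $m(\theta+\beta\bar a)+\text{noise}$ with $\theta\sim F$. The top-$K$ rule therefore pins down a cutoff $\tau_K$ solving $\Pr(\bar s\ge \tau_K)=K$. Because a single author has measure zero, a deviation to $a$ leaves $\tau_K$ unchanged. The deviator's signal is the population signal shifted by $m\beta(a-\bar a)$. Her ex-ante acceptance probability is then $\Pr(\bar s+m\beta(a-\bar a)\ge\tau_K)$, and its derivative in $a$ at $a=\bar a$ is $m\beta\,h_m(\tau_K;\bar a)$.

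\textbf{Step 2: best response and the fixed point.} Differentiating \eqref{eq:author-payoff} gives the marginal condition $c_A'(a)=V\beta m\,h_m(\tau_K;\cdot)$. This holds exactly at the symmetric point; off it, the acceptance-probability density is evaluated at the shifted cutoff. Strict convexity of $c_A$, together with $c_A'(0)=0$ and $c_A'\to\infty$, gives an interior solution. A symmetric equilibrium is a fixed point $a=\bar a$, which is precisely \eqref{eq:author-foc}: $c_A'(a)=V\beta\Lambda(m,K,N,a)$.

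Existence and uniqueness of the fixed point are exactly what Assumption~\ref{ass:marginal} posits, so I invoke it directly. Existence also follows independently: the left side starts at $0$ and diverges, while the right side is continuous, positive and bounded by $V\beta\sup h_m$, so the intermediate value theorem applies. To check that this critical point is a global maximum of the deviator's problem, I would use log-concavity. By Assumption~\ref{ass:authors}, $h_m$ is log-concave, hence unimodal. I would argue that at $\tau_K$ we are on the relevant side of the mode for reasonable $K$. Failing that, I would simply rely on Assumption~\ref{ass:marginal}'s uniqueness clause and treat the FOC as characterizing equilibrium, as the paper does.

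\textbf{Step 3: comparative statics in $m$.} I would argue by contradiction. Take $m'>m$ and suppose $a^*(m')<a^*(m)$. Strict convexity makes $c_A'$ strictly increasing, so $c_A'(a^*(m'))<c_A'(a^*(m))$. That forces $\Lambda(m',a^*(m'))<\Lambda(m,a^*(m))$. Since $\Lambda$ is weakly increasing in $m$, this requires $\Lambda$ to fall in $a$ by enough to offset the rise in $m$.

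This is the main obstacle. The dependence of $\Lambda$ on $a$ is subtle: in symmetric equilibrium a common shift $\beta a$ moves both the signal distribution and $\tau_K$ by the same amount $m\beta a$. So $h_m(\tau_K;a)$ is actually invariant to $a$, because only relative polish matters. I would make this explicit by noting that $\tau_K(a)=\tau_K(0)+m\beta a$. Then $\Lambda(m,a)=\Lambda(m,0)$ is independent of $a$, and \eqref{eq:author-foc} becomes $c_A'(a^*)=V\beta\Lambda(m,K,N,0)$. Monotonicity of $a^*(m)$ then follows immediately from $c_A'$ increasing and $\Lambda$ weakly increasing in $m$, and the contradiction above is ruled out. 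Finally, $c_A$ is increasing on $[0,\infty)$ because $c_A'\ge0$ there, so the dissipation $c_A(a^*(m))$ is weakly increasing in $m$.
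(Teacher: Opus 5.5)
Your argument proves everything the paper's proof proves, but the key step is different. The paper applies the implicit function theorem to $\Phi(a,m)=V\beta m\,h_m(\tau_K;a)-c_A'(a)$. It asserts $\partial\Phi/\partial a<0$ from ``sufficient convexity'' and signs $\partial\Phi/\partial m=V\beta\,\partial\Lambda/\partial m\ge 0$ via Assumption~\ref{ass:marginal}. You instead use the location shift $\tau_K=m\beta a+z(m,K)$, the identity the paper only exploits later in (G.2)--(G.3). It shows that $\Lambda(m,K,N,a)$ does not depend on $a$, so \eqref{eq:author-foc} becomes $a^*(m)=(c_A')^{-1}\bigl(V\beta\Lambda(m,K,N,0)\bigr)$.

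Existence and uniqueness then follow because $c_A'$ is a continuous, strictly increasing bijection of $[0,\infty)$ onto itself. Monotonicity in $m$ follows directly from weak monotonicity of $\Lambda$. Your route needs neither differentiability in $m$ nor $c_A''>0$ at the solution, and it turns the uniqueness clause of Assumption~\ref{ass:marginal} into a consequence rather than an assumption. It also clarifies the paper's step. Along the moving cutoff, $\partial\Phi/\partial a=-c_A''(a^*)$. At a fixed cutoff, $\partial\Phi/\partial a$ is a deviator's second derivative, which nothing in the assumptions signs. The contradiction set-up at the start of your Step 3 is then superfluous.

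Be careful, however, with the best-response check: your suggested fix points the wrong way. Take $\tilde h_m$, $\tilde H_m$ and $z(m,K)$ as in Appendix G, with the cutoff fixed at $m\beta a^*+z$. A deviator's payoff $V[1-\tilde H_m(\tau_K-m\beta a)]-c_A(a)$ then has second derivative $-V\beta^2m^2\tilde h_m'(z)-c_A''(a^*)$ at $a=a^*$. When $z$ lies above the mode of $\tilde h_m$ (the case the paper works in for $K<1/2$), the marginal return to own polish is increasing there. So log-concavity works against the second-order condition, not for it.

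More pointedly, differentiate your reduced FOC in $K$, using $\partial z/\partial K=-1/\tilde h_m(z)$. This gives $c_A'(a^*)\,\partial a^*/\partial K=-V^2\beta^2m^2\tilde h_m'(z)/c_A''(a^*)$. Hence Premise (i) of Proposition~\ref{prop:policy} at $m=1$ is equivalent to that second derivative being strictly positive: the FOC point is a strict local minimum of a deviating author's payoff. Every best response is interior, and the symmetric FOC has the unique root $a^*$. So under Premise (i) no symmetric pure-strategy equilibrium exists at $m=1$.

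Deferring to ``treat the FOC as characterizing equilibrium, as the paper does'' therefore leaves the existence of a symmetric equilibrium unproved in both arguments. What is actually proved is existence, uniqueness and monotonicity of the solution of \eqref{eq:author-foc}. Closing the gap would need an extra curvature bound such as $c_A''\ge V\beta^2m^2\sup_s|\tilde h_m'(s)|$, which makes the deviator's problem globally concave. At $m=1$, however, that bound is incompatible with Premise (i).
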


\begin{proof} See Appendix \ref{app:ratrace-proof}. \end{proof}

In symmetric equilibrium every author has acceptance probability $K$, unchanged from a counterfactual with $a = 0$. Polishing is pure rent dissipation.

\begin{corollary}[Author welfare and reviewer effort]\label{cor:author-welfare}
In symmetric equilibrium, $U_A^*(m) = KV - c_A(a^*(m))$ is weakly decreasing in $m$.
\end{corollary}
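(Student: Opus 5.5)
The proof has two parts. First we pin down the author's equilibrium acceptance probability as $K$. Then we invoke the monotonicity of dissipation from Proposition~\ref{prop:ratrace}.

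\emph{Step 1: acceptance probability.} Fix $m\in(0,1]$, and let $a^*(m)$ be the unique symmetric equilibrium polish from Proposition~\ref{prop:ratrace}. When every author chooses the same $a^*(m)$, each author's aggregate signal has the same distribution. By \eqref{eq:aggregate-signal}, it is the convolution of $m(\theta+\beta a^*(m))$ with $\mathcal N(0,\Sigma^2(m)/N)$. Only the location of this distribution depends on $a^*(m)$. The editor accepts the top $K$ fraction by signal, so the threshold $\tau_K$ solves $\Pr(\bar s\ge \tau_K)=K$ under this common distribution. Averaging over $\theta$, which is exactly the ex-ante expectation in \eqref{eq:author-payoff}, each author is accepted with probability $\E_\theta[\Pr(\text{accepted}\mid\theta,a^*(m);\pi)]=K$. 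The common shift $\beta a^*(m)$ translates $\tau_K$ one-for-one, so the threshold adjusts and acceptance is unchanged. Substituting into \eqref{eq:author-payoff} gives $U_A^*(m)=KV-c_A(a^*(m))$.

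\emph{Step 2: monotonicity.} The term $KV$ does not depend on $m$. By Proposition~\ref{prop:ratrace}, $c_A(a^*(m))$ is weakly increasing in $m$. As a check, this also follows directly: $a^*(m)$ is weakly increasing, and $c_A$ is nondecreasing on $[0,\infty)$ because it is strictly convex with $c_A'(0)=0$, so $c_A'(a)\ge 0$ for $a\ge 0$. Hence $U_A^*(m)$ is weakly decreasing in $m$.

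\emph{Potential gap.} The step most likely to need care is Step 1's claim that the acceptance probability is exactly $K$. This relies on the large-population convention that the editor's top-$K$ rule defines $\tau_K$ through the realized signal distribution, which is deterministic in the unit-mass limit. I would make that explicit. Once it is, the result is immediate from Proposition~\ref{prop:ratrace}.
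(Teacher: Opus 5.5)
Your proposal is correct and matches the paper's own argument: the paper treats the corollary as immediate from the observation that, in symmetric equilibrium, the top-$K$ rule gives every author acceptance probability $K$, so $U_A^*(m)=KV-c_A(a^*(m))$, combined with the monotonicity of dissipation in Proposition~\ref{prop:ratrace}. One small slip: the common location shift of the aggregate signal is $m\beta a^*(m)$, not $\beta a^*(m)$, but this does not affect the argument.
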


Authors benefit when reviewers shirk: the shirking environment weakens the contest.

\section{Editor Response and Welfare Misalignment}
\label{sec:editor}

Let $Q(N_{\mathrm{ret}}, K, M)$ denote expected accepted-paper quality, where $N_{\mathrm{ret}} = N[m + (1-m)(1-\pdet)]$ is the retained sample size and $M = M(m, \pdet)$. The editor's payoff is $U_E = Q(N_{\mathrm{ret}}, K, M) - \epsilon N - D(\pdet)$. Note that the editor pays the invitation cost $\epsilon N$ on the full panel, while sorting quality uses only the retained reports $N_{\mathrm{ret}} \leq N$.

\begin{lemma}[Selection quality]\label{lem:Q}
For fixed $K$ and $N_{\mathrm{ret}}$, $Q$ is strictly increasing in $M$. For fixed $(N_{\mathrm{ret}}, M)$, $Q$ is strictly decreasing in $K$. For fixed $(K, M)$ with $M > 0$, $Q$ is strictly increasing in $N_{\mathrm{ret}}$ with diminishing marginal returns.
\end{lemma}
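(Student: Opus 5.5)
The plan is to reduce $Q$ to a one-dimensional function of the effective noise-to-signal ratio, and then treat the three monotonicity claims separately. In symmetric equilibrium every author chooses the same polish $a$, so the common shift $\beta a$ does not affect the ranking. By \eqref{eq:aggregate-signal-detection}, ranking by $\bar s_{\mathrm{ret}}$ is then equivalent to ranking by
\[
X = \bar s_{\mathrm{ret}}/M - \beta a = \theta + \eta, \qquad \eta \sim \mathcal{N}(0, v), \qquad v(N_{\mathrm{ret}}, M) = \frac{\Sigma^2(M)}{M^2 N_{\mathrm{ret}}}.
\]
The editor accepts $\{X \geq \tau_K(v)\}$, which has mass $K$, so $Q = \E[\theta \mid X \geq \tau_K(v)] = K^{-1}\E[\phi_v(X)\mathbf{1}\{X \geq \tau_K(v)\}]$, where $\phi_v(x) = \E[\theta \mid X = x]$.

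\textbf{Monotonicity of $\phi_v$.} Because $f$ is log-concave (Assumption~\ref{ass:authors}) and the noise is Gaussian, the family of densities of $X$ given $\theta$ has the monotone likelihood ratio property in $(x,\theta)$. Hence $\phi_v$ is strictly increasing in $x$, since $\theta$ is nondegenerate.

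\textbf{Step 1: $Q$ is strictly decreasing in $v$.} Take $v_2 > v_1$ and write $X_2 = X_1 + \xi$ with $\xi \sim \mathcal{N}(0, v_2 - v_1)$ independent of $(\theta, X_1)$. Since $\theta \to X_1 \to X_2$ is a Markov chain, $\phi_{v_2}(X_2) = \E[\phi_{v_1}(X_1) \mid X_2]$, and therefore
\[
K Q(v_2) = \E[\phi_{v_1}(X_1)\mathbf{1}\{A_2\}], \qquad A_2 = \{X_2 \geq \tau_K(v_2)\}.
\]
The event $A_2$ has probability $K$. Among all events of probability $K$, the integral of the increasing function $\phi_{v_1}(X_1)$ is uniquely maximized (up to null sets) by the top set $A_1 = \{X_1 \geq \tau_K(v_1)\}$. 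This is a bathtub/rearrangement argument.

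Strictness follows because $\xi$ has full support: $A_2$ differs from $A_1$ on a set of positive probability, and $\phi_{v_1}$ is strictly increasing.

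\textbf{Step 2: the comparative statics in $M$ and $N_{\mathrm{ret}}$.}
\begin{itemize}
\item $v$ is proportional to $1/N_{\mathrm{ret}}$, so $Q$ is strictly increasing in $N_{\mathrm{ret}}$.
\item For $M \in (0,1]$,
\[
\frac{\Sigma^2(M)}{M^2} = \frac{\sigma_e^2}{M} + \frac{(1-M)\sigma_s^2}{M^2}.
\]
Both terms are strictly decreasing in $M$, so $v$ is strictly decreasing in $M$ and $Q$ is strictly increasing in $M$.
\end{itemize}

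\textbf{Step 3: $Q$ is strictly decreasing in $K$.} Differentiate $KQ(K) = \int_{\tau_K}^{\infty}\phi_v(x)\,h(x)\,dx$ using $d\tau_K/dK = -1/h(\tau_K)$, where $h$ is the density of $X$. This gives
\[
\frac{dQ}{dK} = \frac{\phi_v(\tau_K) - Q}{K}.
\]
This is strictly negative, because $Q$ is an average of $\phi_v$ over $\{X > \tau_K\}$ and $\phi_v$ is strictly increasing.

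\textbf{Main obstacle: diminishing returns in $N_{\mathrm{ret}}$.} This is the step I expect to be hardest. Writing precision as $n \propto N_{\mathrm{ret}}$, I need $\partial^2 Q/\partial n^2 < 0$.

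My first attempt would use the Gaussian structure. Conditional on $X = x$, the posterior mean is a shrinkage toward the prior, so $Q$ equals $\E[\theta \mid \theta \geq \tau]$ for a signal whose correlation with $\theta$ is $\rho(n) = \sqrt{\mathrm{Var}(\theta)/(\mathrm{Var}(\theta) + v)}$. This suggests the approximation $Q \approx \E\theta + \rho(n)\,\mathrm{sd}(\theta)\,\lambda(K)$, where $\lambda$ is the standardized truncated mean. That expression is exact for normal $\theta$ and concave in $n$, since $\sqrt{n/(n + c)}$ is concave.

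For general log-concave $f$, I would express $dQ/dv$ through the heat equation $\partial_v h = \tfrac12 h''$. I would then try to show that $|dQ/dv|$ scaled by $1/n^2$ is decreasing in $n$, using log-concavity of the signal density $h_m$ (convolutions preserve log-concavity). If this fails in full generality, I would state the concavity claim under the Gaussian-prior benchmark, or for $N_{\mathrm{ret}}$ large, where the $\rho(n)$ approximation becomes exact to leading order.
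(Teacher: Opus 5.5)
Your treatment of the three monotonicity claims is correct and takes a different, more general route than the paper. The paper shows that $\mathrm{SNR}(M)=M^2N_{\mathrm{ret}}\operatorname{Var}(\theta)/\Sigma^2(M)$ rises with $M$ and then cites Tallis's truncated bivariate-normal results. Those results presuppose that $(\theta,\bar s_{\mathrm{ret}})$ is jointly normal, which fails for a general log-concave $F$ and for the uniform prior of the calibration. The paper handles $K$ with a one-line intuition and gives no separate argument for $\partial Q/\partial N_{\mathrm{ret}}>0$. Your reduction to the single noise parameter $v=\Sigma^2(M)/(M^2N_{\mathrm{ret}})$ works for any nondegenerate prior. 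Combined with the garbling identity $\phi_{v_2}(X_2)=\E[\phi_{v_1}(X_1)\mid X_2]$ and the bathtub inequality, it gives strict monotonicity in $M$ and $N_{\mathrm{ret}}$; the inequality is strict because $\Pr(A_2\mid X_1)\in(0,1)$. Your formula $dQ/dK=(\phi_v(\tau_K)-Q)/K$ makes the $K$ claim rigorous. One correction: strict monotonicity of $\phi_v$ comes from the Gaussian noise, via $\phi_v'(x)=\operatorname{Var}(\theta\mid X=x)/v>0$, not from log-concavity of $f$. Your expansion of $\Sigma^2(M)/M^2$ also shows that $\sigma_s>\sigma_e$ is not needed for the $M$ claim.

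The gap is the diminishing-returns clause, which you do not prove. The closed form $\E\theta+\rho(n)\,\mathrm{sd}(\theta)\lambda(K)$ is exact only for normal $\theta$. A large-$N_{\mathrm{ret}}$ expansion gives only eventual concavity, not concavity on the relevant range: in the calibration $N=2$ and $M$ can be small, so $v$ is not small relative to $\operatorname{Var}(\theta)$. Falling back on either benchmark proves a weaker statement than the lemma. The paper does not close this step either; its proof simply asserts that the information gain from truncated-tail observations is strictly concave in precision.

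Your heat-equation idea is the right way to make this precise. Let $\varphi_v$ be the $\mathcal N(0,v)$ density. Both $h_v=f*\varphi_v$ and $g_v(x)=\int\theta f(\theta)\varphi_v(x-\theta)\,d\theta$ solve $\partial_v u=\tfrac12u''$, and the threshold moves as $\tau_K'(v)=-h_v'(\tau_K)/(2h_v(\tau_K))$. Differentiating $KQ=\int_{\tau_K}^\infty g_v$ then gives
\[
\frac{\partial Q}{\partial v}=-\frac{\phi_v'(\tau_K)\,h_v(\tau_K)}{2K},\qquad\text{hence}\qquad \frac{\partial Q}{\partial n}=\frac{\operatorname{Var}(\theta\mid X=\tau_K)\,h_v(\tau_K)}{2Kn},\quad n=1/v .
\]
Concavity in $N_{\mathrm{ret}}$ is therefore equivalent to the right-hand side decreasing in $n$ along the moving threshold. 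This holds in the Gaussian case and in both limits: it behaves like $c_0n^{-1/2}$ as $n\to0$ and like $c_\infty n^{-2}$ as $n\to\infty$. Establishing it for every log-concave $f$ is the missing argument. Without it, the concavity clause should be stated as an assumption or restricted to cases where it can be verified.
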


\begin{proof} See Appendix \ref{app:Q-proof}. \end{proof}

In the decentralized benchmark ($K = K_0$, $\pdet = 0$, so $N_{\mathrm{ret}} = N$), the editor chooses $N^*(\gamma) \in \arg\max_N [Q(N, K_0, m(\gamma)) - \epsilon N]$.

\begin{proposition}[Bounded limits of review intensity]\label{prop:editor}
$N^*(\gamma) \in \mathbb{N}$ is constant at $N_0^*$ for $\gamma < \gamma_1$, may rise at $\gamma_1$ (provided the marginal information value of an additional reviewer exceeds $\epsilon$ at $m = m_1$), and decays to the institutional minimum $N^* = 1$ as $\gamma \to 1$.
\end{proposition}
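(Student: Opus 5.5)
The plan is to treat $N^*(\gamma)$ as the argmax of $\Phi(N;m) \equiv Q(N,K_0,m) - \epsilon N$ over $N \in \mathbb{N}$. Since $\pdet = 0$ in the decentralized benchmark, $N_{\mathrm{ret}} = N$ and $M = m$, so $\gamma$ enters only through the effort rate $m(\gamma)$. By Lemma~\ref{lem:Q}, $Q$ is strictly increasing and concave in $N$ (diminishing marginal returns) whenever $m>0$. The argmax is therefore characterized by the marginal-value condition
\[
\Delta Q(N;m) \equiv Q(N+1,K_0,m) - Q(N,K_0,m) \ge \epsilon > \Delta Q(N+1;m),
\]
with ties broken by a fixed selection rule, say the smallest maximizer. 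I would then go through the three regimes of Proposition~\ref{prop:participation} in turn.

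\emph{Step 1 (flat region).} For $\gamma<\gamma_1$, Proposition~\ref{prop:participation}(a) gives $m=1$ independently of $\gamma$. The objective $\Phi(\cdot\,;1)$ therefore does not depend on $\gamma$, so neither does its argmax. Call it $N_0^*$; it is finite because concavity together with $\epsilon>0$ bounds the maximizer.

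\emph{Step 2 (the jump).} At $\gamma_1$, part (c) says $m$ drops discretely to $m_1 = 1 - G(\widetilde t(\gamma_1,0))$. The new optimum is $N^*(\gamma_1) = \max\{N : \Delta Q(N-1;m_1) \ge \epsilon\}$. Lower $m$ changes both the signal mean scaling and the variance $\Sigma^2(m)/N$, so the marginal value of an extra reviewer can move in either direction. Hence $N^*$ rises exactly when $\Delta Q(N_0^*;m_1) \ge \epsilon$, which is the proviso in the statement. I would state this as a conditional claim and not attempt to sign it.

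\emph{Step 3 (decay as $\gamma\to1$).} Here I would show that $\Delta Q(1;m) \to 0$ as $m \to 0$, so that eventually $\Delta Q(1;m) < \epsilon$ and, by concavity, $N^* = 1$. First I would check that $m(\gamma) = 1 - G(\widetilde t(\gamma,0))$ is decreasing in $\gamma$: $\widetilde t$ increases because $c_R^{-1}$ is decreasing and its argument $\psi_\alpha(1-\gamma)$ falls. Next, at $\gamma=1$ the argument is $0 \le c_R(1)$, so by the stated convention $\widetilde t = 1$ and $m(1)=0$.

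Finally I would establish continuity of $Q$ in $m$ at $m=0$. The standardized signal $\bar s/\sqrt{\Sigma^2(m)/N}$ has mean loading $m\sqrt{N}/\Sigma(m)$, which goes to $0$. Consequently the correlation between $\theta$ and $\bar s$ vanishes, and $Q(N,K_0,m) \to \E[\theta]$ uniformly over $N$ in any bounded set. This forces $\Delta Q(N;m) \to 0$ for each $N$. Because the $\epsilon N$ cost rules out large $N$, there is a threshold $\bar\gamma<1$ beyond which $N^*=1$, the institutional minimum (taking $\mathbb{N}$ to start at $1$).

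The main obstacle is this continuity and uniformity argument. Lemma~\ref{lem:Q} gives only monotonicity and concavity, not the limiting behavior of $Q$. I would handle it by writing $Q$ explicitly as $\E[\theta \mid \bar s \ge \tau_K]$ under the Gaussian-noise convolution and bounding $|Q - \E\theta|$ by a term proportional to $m\sqrt N/\Sigma(m)$, using that $\theta$ is bounded on $[\underline\theta,\overline\theta]$. A secondary subtlety is that $N^*$ may not be monotone between $\gamma_1$ and $\bar\gamma$. The statement only asserts the limit, so I would not try to prove monotonic decay.
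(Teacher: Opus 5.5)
Your proposal is correct and follows the paper's own three-part argument. First, $m=1$ is constant for $\gamma<\gamma_1$, so the argmax is fixed at $N_0^*$. Second, the rise at $\gamma_1$ is left conditional on $\Delta Q(N_0^*,m_1)$ exceeding $\epsilon$. Third, the collapse $m(\gamma)\to 0$ drives $\Delta Q\to 0$ and hence $N^*\to 1$.

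Your explicit bound $|Q-\E[\theta]|\lesssim m\sqrt{N}/\Sigma(m)$, obtained from bounded $\theta$, supplies the limiting step that the paper only asserts via Lemma~\ref{lem:Q}; that lemma gives no limiting behavior. Two small corrections: the finiteness of $N_0^*$ comes from $Q\le\overline\theta$, not from concavity alone. And $m(\gamma)\to 0$ needs continuity of $c_R^{-1}$ and $G$, not just monotonicity plus $m(1)=0$.
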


\begin{proof} See Appendix \ref{app:editor-proof}. \end{proof}

\begin{corollary}[Welfare misalignment]\label{cor:misalignment}
At $\gamma = \gamma_1$, per-author welfare $U_A^*$ strictly rises and editor welfare $U_E^*$ strictly falls (under non-uniform $F$).
\end{corollary}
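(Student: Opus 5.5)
The plan is to compare the decentralized equilibrium just below and at $\gamma_1$, treating the jump in $m$ from Proposition~\ref{prop:participation}(c) as the driving shock. Write $m^- = 1$ for the effort rate for $\gamma < \gamma_1$ and $m_1 = 1 - G(\widetilde t(\gamma_1,0)) < 1$ for the effort rate at $\gamma_1$. Since $\pdet = 0$ in the decentralized benchmark, we have $M = m$ and $N_{\mathrm{ret}} = N$. The policy relevant to the editor is $N^*(\gamma)$ with $K = K_0$ fixed.

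\textbf{Author side.} By Corollary~\ref{cor:author-welfare}, $U_A^*(m) = K_0 V - c_A(a^*(m))$, so weak monotonicity is immediate. For strictness, I will show that $a^*(m_1) < a^*(1)$; strict monotonicity of $c_A$ on $(0,\infty)$, which follows from Assumption~\ref{ass:authors}(ii), then gives $c_A(a^*(m_1)) < c_A(a^*(1))$. I would argue from the first-order condition \eqref{eq:author-foc}.
\begin{itemize}
\item The right-hand side is $V\beta\Lambda(m,K,N,a)$, with $\Lambda = m h_m(\tau_K;a)$. This is strictly positive, since $h_m$ is a Gaussian convolution of a strictly positive density, so $a^*>0$.
\item If $\Lambda$ is strictly increasing in $m$ at the fixed point, uniqueness and strict convexity of $c_A$ give strict ordering. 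The natural check is that the factor $m$ scales the density. Standardizing $\bar s/m$ shows $m h_m(\tau_K) = \tilde h(\tilde\tau_K)$, where $\tilde h$ is the density of $\theta + \beta a + \text{noise}$ with variance $\Sigma^2(m)/(N m^2)$. That variance is strictly decreasing in $m$.
\item Lowering $m$ therefore flattens $\tilde h$, and at the quantile $\tilde\tau_K$ this strictly lowers the density for interior $K$ under log-concavity.
\end{itemize}
If this strict step resists, I would fall back on Assumption~\ref{ass:marginal}'s weak monotonicity plus the explicit factor $m$. The fact that $m_1 < 1$ is a discrete drop is what makes strictness plausible.

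\textbf{Editor side.} Let $N^-=N_0^*$ be optimal at $m=1$ and $N_1 = N^*(\gamma_1)$. Then
\[
U_E^*(\gamma_1) = Q(N_1,K_0,m_1) - \epsilon N_1 < Q(N_1,K_0,1) - \epsilon N_1 \le Q(N_0^*,K_0,1) - \epsilon N_0^* = U_E^*(\gamma_1^-).
\]
The strict inequality is Lemma~\ref{lem:Q}'s strict monotonicity in $M$. The weak inequality is optimality of $N_0^*$ at $m=1$. This revealed-preference step handles the possible rise in $N^*$ from Proposition~\ref{prop:editor} without needing to know its direction.

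\textbf{Main obstacle.} The main obstacle is the role of the qualifier ``non-uniform $F$.'' I expect it to enter because under uniform $F$ the standardized signal comparison can degenerate: the selection gain in $Q$, or the change in $\tilde h$ at $\tau_K$, can be flat in the relevant region. Lemma~\ref{lem:Q}'s strictness may then rely on non-uniformity. I would therefore re-examine the proof of Lemma~\ref{lem:Q} in Appendix~\ref{app:Q-proof} to confirm that its strict monotonicity in $M$ holds for non-uniform log-concave $F$, and invoke it under that hypothesis. Where the author-side strictness needs the same condition, I would likewise route it through the strict decrease of $\tilde h(\tilde\tau_K)$ in the noise variance, which holds for non-uniform log-concave $F$.
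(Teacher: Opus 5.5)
Your proposal is correct. The editor side is the paper's argument, and on the author side you take a genuinely different and more complete route.

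\textbf{Editor side.} Lemma~\ref{lem:Q} gives $Q(N,K_0,m_1)<Q(N,K_0,1)$ at every $N$, and optimality of $N_0^*$ at $m=1$ finishes the argument. The paper says ``taking maxima preserves the inequality''; your chain through $N_1$ is the same step written out.

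\textbf{Author side.} The paper cites Proposition~\ref{prop:ratrace}, which only delivers weak monotonicity, and simply asserts strictness ``under non-uniform $F$.'' You instead rescale to $\bar s/m$. Then $\Lambda(m)=m h_m(\tau_K)$ is the density of $\theta+\beta a+sZ$ at its $(1-K)$-quantile, with $s^2=\Sigma^2(m)/(Nm^2)$ strictly decreasing in $m$. Because $\beta a$ is a pure location shift, this also shows $\Lambda$ is independent of $a$ (the paper's (G.2)). Hence $c_A'(a^*(m))=V\beta\Lambda(m)$, and strictness of $a^*$ is exactly strictness of $\Lambda$. This gives a real proof, and it derives the weak monotonicity of Assumption~\ref{ass:marginal} from log-concavity rather than assuming it.

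\textbf{Closing the asserted step.} The step you only assert (``flattening lowers the density at the quantile'') needs one line. Let $u=s^2$, let $g_u$ be the density of $\theta+\sqrt{u}Z$ (the shift $\beta a$ is irrelevant at a quantile), and let $G_u$ be its cdf. The heat equation $\partial_u g_u=\tfrac12 g_u''$ gives $\partial_u G_u=\tfrac12 g_u'$. Hence at $x_u=G_u^{-1}(1-K)$,
\[
\frac{d}{du}\,g_u(x_u)=\tfrac12 g_u''(x_u)-\frac{g_u'(x_u)^2}{2g_u(x_u)}=\tfrac12\,g_u(x_u)\,(\log g_u)''(x_u)<0 .
\]
The sign holds because $-(\log g_u)''=u^{-1}\bigl(1-\mathrm{Var}(\theta\mid\theta+\sqrt{u}Z)/u\bigr)>0$. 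The posterior is $1/u$-strongly log-concave on a bounded interval, so its variance is strictly below $u$.

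\textbf{The non-uniform qualifier.} Nothing in this computation uses non-uniformity, so your diagnosis of the qualifier is off. In the paper's proof it is invoked only for strictness of $a^*(m)$, never in Lemma~\ref{lem:Q}. On your route it is unnecessary even for uniform $F$, which is the paper's own calibration.

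\textbf{Two smaller points.} Drop the fallback to Assumption~\ref{ass:marginal}: weak monotonicity of $\Lambda$ cannot give strictness, and the factor $m$ is already inside $\Lambda$. Also, like the paper, you compare $a^*$ at fixed $N$ while letting $N^*$ move on the editor side. Your formula shows that if $N^*$ jumps at $\gamma_1$, you additionally need $\Sigma^2(m_1)/(N_1 m_1^2)>\sigma_e^2/N_0^*$. This holds comfortably for small $m_1$, but it should be stated.
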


\begin{proof} See Appendix \ref{app:misalignment-proof}. \end{proof}

Authors and editors have strictly opposed private interests across the shirking threshold. Reforms cannot be built on aligned-welfare grounds.

\section{Editorial Design: The Sign Reversal}
\label{sec:policy}

The editor's constrained reform problem is
\begin{equation}
\max_{N, K, \pdet} U_E(N, K, \pdet) \quad \text{s.t.} \quad U_A(N, K, \pdet) \geq U_A(\pi^{\text{dec}}).
\label{eq:constrained-problem}
\end{equation}

The IR constraint does not follow from submission behavior in our model, where the author mass is fixed at unity: it is a normative constraint the editor respects, capturing the political-economy observation that reforms substantially harming incumbent authors are not sustainable in voluntary-submission journals. Formally, one can microfound this constraint through an outside option (authors submit elsewhere if $U_A$ falls below the decentralized level) or through journal competition; we take the constraint as given here to keep the editorial optimization tractable and return to endogenous submission margins in the concluding discussion.

\paragraph{Treatment of $N$.} For tractability and to focus on the new instruments introduced by AI ($K$ and $\pdet$), we hold $N$ fixed at the decentralized post-transition value $N_0^* \equiv N^*(\gamma_1)$ characterized in Proposition \ref{prop:editor}, and optimize the constrained reform problem over $(K, \pdet)$. This restriction is consistent with institutional reality: journals' per-paper reviewer counts are typically set by editorial conventions and budget constraints rather than reoptimized in response to each technology shift. We verify in Section \ref{sec:compstats} that allowing $N$ to vary modestly does not overturn the sign-reversal result.

The Lagrangian with multiplier $\lambda \geq 0$ is
\begin{equation}
\mathcal{L}(K, \pdet, \lambda) = Q(N_{\mathrm{ret}}(m, \pdet), K, M(m, \pdet)) - D(\pdet) + \lambda \left[ K V - c_A(a^*(K, \pdet)) - \bar{U}_A(\gamma) \right],
\label{eq:lagrangian}
\end{equation}
where the dependence of $Q$ on $\pdet$ runs through both the effective effort share $M$ \emph{and} the retained sample size $N_{\mathrm{ret}}$. The first-order conditions are
\begin{align}
\frac{\partial \mathcal{L}}{\partial \pdet} &= \left(\frac{\partial Q}{\partial M} \cdot \frac{\partial M}{\partial \pdet} + \frac{\partial Q}{\partial N_{\mathrm{ret}}} \cdot \frac{\partial N_{\mathrm{ret}}}{\partial \pdet}\right) - D'(\pdet) - \lambda \, c_A'(a^*) \frac{\partial a^*}{\partial \pdet} = 0, \label{eq:foc-pdet-main}\\
\frac{\partial \mathcal{L}}{\partial K} &= \frac{\partial Q}{\partial K} + \lambda \left[ V - c_A'(a^*) \frac{\partial a^*}{\partial K} \right] = 0. \label{eq:foc-K-main}
\end{align}
The bracketed term in \eqref{eq:foc-pdet-main} is the \emph{net marginal sorting benefit} of detection: the composition gain $(\partial Q/\partial M)(\partial M/\partial \pdet) > 0$ minus the sample-size loss $|(\partial Q/\partial N_{\mathrm{ret}})(\partial N_{\mathrm{ret}}/\partial \pdet)| > 0$ (since $\partial N_{\mathrm{ret}}/\partial \pdet = -N(1-m) < 0$). Detection raises composition quality but discards reports; an interior $\pdet^* > 0$ requires the net effect to dominate the IR-compensation term.

\paragraph{Why $K$ rather than $N$ as the compensating instrument.} The reform problem above optimizes over $(K, \pdet)$ holding $N$ at $N_0^*$. A natural question is whether the editor could instead invite more reviewers ($N > N_0^*$) to compensate for AI-induced signal degradation, leaving $K$ unchanged. We argue this alternative does not work because raising $N$ \emph{tightens} rather than relaxes the authors' IR constraint. Increasing the number of reviewers raises the retained sample size $N_{\mathrm{ret}}$, which sharpens the aggregate signal precision $\Sigma^2(M)/N_{\mathrm{ret}}$. A more precise signal increases the marginal informativeness of each unit of polish, raising the equilibrium $a^*(N)$ in the author's FOC; rat-race dissipation $c_A(a^*)$ rises and author welfare falls. We verify this monotonicity numerically at the post-transition baseline of Table~\ref{tab:baseline}: as $N$ rises from $1$ to $20$, equilibrium polish $a^*$ rises from $0.21$ to $0.85$ and author welfare $U_A^*$ falls from $0.29$ to $0.19$. Therefore $N$ moves authors in the wrong direction. Among the available instruments $(N, K, \pdet)$, only $K$ both increases author welfare and is feasible to deploy as a compensating margin: raising $K$ has a direct mechanical effect $\partial U_A/\partial K > 0$ at the post-transition baseline (Premise iii), whereas raising $N$ has a perverse effect $\partial U_A/\partial N < 0$. This makes loosening $K$ the unique policy compensation channel for the IR violation induced by detection.

Under detection, the author's best-response polish is characterized by the FOC $c_A'(a^*) = V\beta M h_M(\tau_K; a^*)$, extending Proposition \ref{prop:ratrace} by replacing $m$ with the effective effort share $M(m, \pdet)$; the density $h_M$ is evaluated in the retained-signal distribution. When $\pdet = 0$, $M = m$ and the Proposition \ref{prop:ratrace} FOC is recovered.

Before stating the central reform result, we establish an analytical lemma showing that one of the regime-shift conditions can be derived from primitives plus a measurable sharpness condition on the participation transition.

\begin{lemma}[Sharpness lemma]\label{lem:sharpness}
Under Assumption \ref{ass:authors} (log-concave $f$), Assumption \ref{ass:signals} (Gaussian noise), quadratic polish cost $c_A(a) = \kappa a^2/2$, and a regularity condition (R) on the score ratio (specified in the proof), suppose the rat-race sensitivity holds at the pre-transition baseline:
\[
V < c_A'(a^*) \, \frac{\partial a^*}{\partial K} \quad \text{at } m = 1, \, \pdet = 0. \tag{i}
\]
Then there exists a sharpness threshold $\bar m \in (0, 1)$, computable from primitives, such that whenever the post-transition effort rate satisfies $m_1 \leq \bar m$, the sign-flipped condition
\[
V > c_A'(a^*) \, \frac{\partial a^*}{\partial K} \quad \text{at } m = m_1, \, \pdet = 0
\]
holds automatically. The regularity condition (R) is satisfied automatically when $f$ has bounded support and bounded density, and we verify it numerically in our calibration.
\end{lemma}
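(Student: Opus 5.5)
We will make the sensitivity term $c_A'(a^*)\,\partial a^*/\partial K$ explicit as a function of $m$, and then show that it vanishes as $m\to 0$ while $V$ does not. The sign flip will follow by continuity, with $\bar m$ defined as the first crossing point.

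\emph{Step 1: setting up the polish equation.} With $c_A(a)=\kappa a^2/2$, the author first-order condition of Proposition~\ref{prop:ratrace} becomes
\[
\kappa a^* = V\beta m\, h_m(\tau_K;a^*).
\]
Here $h_m$ is the density of $\bar s$. By \eqref{eq:aggregate-signal} it is the convolution of the density of $m\theta$ with $\mathcal N(0,\Sigma^2(m)/N)$, shifted by $m\beta a^*$. The threshold $\tau_K$ is pinned down by $1-H_m(\tau_K)=K$. In symmetric equilibrium the shift enters $\tau_K$ and the signal one-for-one, so $z_K\equiv\tau_K-m\beta a^*$ depends only on $(m,K,N)$. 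The equation therefore simplifies to $a^* = (V\beta m/\kappa)\,\phi_m(z_K)$, where $\phi_m$ is the unshifted convolution density. This gives $a^*$ in closed form, which also confirms the uniqueness asserted in Assumption~\ref{ass:marginal}.

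\emph{Step 2: differentiating in $K$.} Since $\partial z_K/\partial K=-1/\phi_m(z_K)$, we get
\[
c_A'(a^*)\frac{\partial a^*}{\partial K} = \kappa a^*\cdot \frac{V\beta m}{\kappa}\,\phi_m'(z_K)\cdot\Big(-\frac{1}{\phi_m(z_K)}\Big) = -\frac{V^2\beta^2 m^2}{\kappa}\,\phi_m(z_K)\,\frac{\phi_m'}{\phi_m}(z_K).
\]
Define the score ratio $\rho_m(K) \equiv -\phi_m'(z_K)/\phi_m(z_K)$. Condition (i) then reads $V<(V^2\beta^2/\kappa)\,\phi_1(z_K)\rho_1(K)$. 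In particular (i) forces $\rho_1>0$, which is where log-concavity enters: $\phi_m$ is log-concave, so the score is monotone in $z$, and this will let us control the sign and size of $\rho_m$ for all $m$. The flipped condition is equivalent to
\[
\Phi(m)\equiv \frac{V\beta^2}{\kappa}\, m^2\,\phi_m(z_K)\,\rho_m(K)<1.
\]

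\emph{Step 3: the limit $m\to 0$, which is the main obstacle.} As $m$ falls, the signal becomes dominated by noise of variance $\Sigma^2(m)/N\to\sigma_s^2/N$. Hence $\phi_m$ converges to a Gaussian of bounded height, with $|\phi_m'/\phi_m|$ bounded on the relevant quantile range. This gives $\Phi(m)=O(m^2)\to 0$.

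Regularity condition (R) is what makes this uniform. We state it as: $\sup_{m\in(0,1]}\phi_m(z_K)\,|\rho_m(K)|\le C<\infty$. When $f$ has bounded support and bounded density, $\phi_m$ is the convolution of a bounded, compactly supported law with a Gaussian whose variance is bounded below by $\sigma_e^2/N$. Its density and score at a fixed interior quantile are then bounded by explicit Gaussian bounds, since the score of such a convolution is a posterior mean of $(z-m\theta)/\mathrm{Var}$ with $\theta$ bounded.

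Under (R), $\Phi(m)\le (V\beta^2C/\kappa)\,m^2$. Therefore $\bar m \equiv \min\{\sqrt{\kappa/(V\beta^2 C)},\,1-\eta\}$ works, and it is computable from primitives. The bound $\bar m<1$ is consistent with (i), which says $\Phi(1)>1$.

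\emph{Step 4: concluding.} We take $\bar m$ as the infimum of $\{m:\Phi(m')\ge 1 \text{ for some } m'\le m\}$, or simply use the explicit bound from Step 3. Continuity of $\Phi$ in $m$ follows from Assumption~\ref{ass:marginal} and Gaussian smoothing. For every $m_1\le \bar m$ this gives $\Phi(m_1)<1$, i.e.\ $V>c_A'(a^*)\,\partial a^*/\partial K$ at $m=m_1$, $\pdet=0$.

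The delicate point I expect is Step 3. We must verify that the score ratio does not blow up as $z_K$ drifts with $m$. Concretely, $z_K$ stays at a fixed quantile of a distribution whose scale stays bounded away from zero, and log-concavity is what keeps $\rho_m$ monotone and bounded there.
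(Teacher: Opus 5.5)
Your proposal is correct and follows the paper's proof essentially step for step: the location-shift reduction to the closed form $a^* = (V\beta m/\kappa)\tilde h_m(z)$, the identity $c_A'(a^*)\,\partial a^*/\partial K = -V^2\beta^2 m^2\,\tilde h_m'(z)/\kappa$, and a uniform bound on $|\tilde h_m'(z)|$ that produces an $m^2$-scaling threshold. Your absolute bound $C$ is the paper's ratio bound rescaled, $C = \bar\rho\,|\tilde h_1'(z(1,K))|$, so your $\bar m = \sqrt{\kappa/(V\beta^2 C)}$ coincides with the paper's $\sqrt{V/(\bar\rho\,|\Psi(1)-V|)}$. The extra cap $1-\eta$ is redundant, because (i) already forces this value below $1$.
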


\begin{proof} See Appendix \ref{app:sharpness-proof}. \end{proof}

Lemma \ref{lem:sharpness} converts what was previously assumed into a structural consequence of two ingredients: the rat-race intensity at pre-transition and the magnitude of the reviewer participation drop. Economically: when the participation transition is sharp enough that effort drops well below $m = 1$ at $\gamma_1$, the polish-cost savings term $c_A'(a^*) \partial a^*/\partial K$ --- which scales with $m^2$ --- falls below the mechanical acceptance value $V$, and the sign flip follows. In the baseline calibration, $\bar m \approx 0.51$ and $m_1 = 0.125$, so the condition holds with substantial margin (factor of approximately $4\times$).

\begin{proposition}[Sign reversal in optimal editorial reform]\label{prop:policy}
Under Assumptions \ref{ass:authors}--\ref{ass:detection}, suppose:
\begin{enumerate}[label=(\roman*)]
\item At the pre-transition baseline ($m = 1$, $\pdet = 0$), the rat-race sensitivity satisfies $V < c_A'(a^*) \, \partial a^* / \partial K$.
\item At the post-transition baseline ($m = m_1 < 1$, $\pdet = 0$), detection intensifies the rat race: $\partial a^* / \partial \pdet > 0$.
\item The post-transition effort rate satisfies $m_1 \leq \bar m$, where $\bar m$ is the sharpness threshold of Lemma \ref{lem:sharpness}.
\item At the post-transition baseline ($\pdet = 0$), the net marginal sorting benefit of detection exceeds the marginal IR-compensation cost:
\[
\frac{\partial Q}{\partial M} \cdot \frac{\partial M}{\partial \pdet} + \frac{\partial Q}{\partial N_{\mathrm{ret}}} \cdot \frac{\partial N_{\mathrm{ret}}}{\partial \pdet} > \lambda \cdot c_A'(a^*) \cdot \frac{\partial a^*}{\partial \pdet},
\]
where $\lambda > 0$ is the IR multiplier at the joint optimum. The first term on the LHS is the composition gain (positive); the second is the sample-size loss (negative); the net sum is the relevant marginal sorting benefit.
\end{enumerate}
Then the constrained optimal policy $(K^*, \pdet^*)$ exhibits:
\begin{enumerate}[label=(\alph*)]
\item Pre-transition ($\gamma < \gamma_1$): $\pdet^* = 0$ and $K^* < K_0$.
\item Post-transition ($\gamma \geq \gamma_1$): $\pdet^* > 0$ and $K^* > K_0$.
\end{enumerate}
The sign of the acceptance-rate adjustment reverses across $\gamma_1$. All four premises are verified in the baseline calibration; Premise (iv) holds with margin throughout the post-transition range.
\end{proposition}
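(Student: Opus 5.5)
The argument works through the Lagrangian \eqref{eq:lagrangian} and the first-order conditions \eqref{eq:foc-pdet-main}--\eqref{eq:foc-K-main}. The sign of the $K$-adjustment is read off from the sign of the bracket $V - c_A'(a^*)\,\partial a^*/\partial K$, evaluated at the relevant baseline. The sign of $\pdet^*$ is read off from \eqref{eq:foc-pdet-main} evaluated at $\pdet=0$. I treat the pre- and post-transition regimes separately. In each, Proposition \ref{prop:participation} fixes $m$: $m=1$ for $\gamma<\gamma_1$, and $m=m(\gamma)\le m_1<1$ for $\gamma\ge\gamma_1$, since $m(\gamma)$ is weakly decreasing in $\gamma$ because $\widetilde t(\gamma,0)$ rises with $\gamma$.

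\emph{Pre-transition.} With $m=1$, formula \eqref{eq:effective-effort} gives $M(1,\pdet)=1$ and $N_{\mathrm{ret}}=N$ for every $\pdet$. Detection therefore has no effect on $Q$ or on $a^*$, and the $\pdet$-derivative of $\mathcal L$ reduces to $-D'(\pdet)$. By Assumption \ref{ass:detection} this is strictly negative for $\pdet>0$, so $\pdet^*=0$.

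For $K$, premise (i) says the author-welfare derivative $\partial U_A/\partial K = V - c_A'(a^*)\,\partial a^*/\partial K$ is negative at $K_0$. Tightening $K$ therefore relaxes the IR constraint, and by Lemma \ref{lem:Q} ($\partial Q/\partial K<0$) it also raises $Q$. Both terms in \eqref{eq:foc-K-main} are strictly negative at $K_0$ for any $\lambda\ge0$. The Lagrangian is thus strictly decreasing in $K$ there, and moving $K$ slightly below $K_0$ is feasible and strictly improving. Under the quasi-concavity needed for the first-order conditions to characterize the optimum, this gives $K^*<K_0$. Here the IR constraint is slack, so $\lambda=0$.

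\emph{Post-transition.} First I show $\pdet^*>0$. At $\pdet=0$ with $m<1$, premise (iv) says the net sorting benefit exceeds $\lambda c_A'(a^*)\,\partial a^*/\partial \pdet$. Since $D'(0)=0$, the derivative \eqref{eq:foc-pdet-main} is strictly positive at $\pdet=0$, so the optimum cannot be at the boundary. Next, premise (ii) shows that positive detection strictly raises $a^*$ and hence lowers $U_A$. At $(K_0,0)$ the IR constraint holds with equality by construction, so once $\pdet^*>0$ it binds, and $\lambda>0$ is consistent with (iv).

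To restore feasibility, the editor must move $K$ in whichever direction raises $U_A$. Premise (iii) together with Lemma \ref{lem:sharpness} (whose hypothesis (i) is exactly premise (i)) gives $V>c_A'(a^*)\,\partial a^*/\partial K$ at $m=m_1$. So $\partial U_A/\partial K>0$, and the IR constraint forces $K^*>K_0$. Formally, \eqref{eq:foc-K-main} with $\lambda>0$ requires $\lambda[V-c_A'\,\partial a^*/\partial K] = -\partial Q/\partial K>0$, which is consistent with this sign. Moving $K$ downward from the point on the IR boundary would violate IR, given that $U_A(K,\pdet^*)$ is increasing in $K$ near $K_0$.

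\emph{Main obstacle.} The delicate step is extending the sign information from the baseline point to the actual optimum. Premises (i) and (iii) are stated at $\pdet=0$ and $K=K_0$, whereas $K^*$ and $\pdet^*$ lie away from it. The sign of $\partial U_A/\partial K$ for $\gamma>\gamma_1$ also uses $m(\gamma)\le m_1\le\bar m$, so Lemma \ref{lem:sharpness} must be applied at $m(\gamma)$ rather than only at $m_1$.

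I would first handle the optimum by a local argument. Write the IR boundary as an implicit function $K=\kappa(\pdet)$ with $\kappa(0)=K_0$. The implicit function theorem gives $\kappa'(0) = (c_A'\,\partial a^*/\partial \pdet)/(V-c_A'\,\partial a^*/\partial K)>0$, so $K$ strictly increases along the binding constraint. I would then argue that the optimum lies on this locally increasing branch, using continuity of $a^*$ in $(M,K)$ from Assumption \ref{ass:marginal}. If this global step cannot be closed, I would state the conclusion as holding for the local constrained optimum, or appeal to the calibration as the paper does.
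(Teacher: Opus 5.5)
Your proposal is correct and follows the paper's proof essentially step for step. Pre-transition, $M\equiv 1$ makes detection pure cost, and premise (i) together with $\partial Q/\partial K<0$ yields tightening. Post-transition, premise (iv) with $D'(0)=0$ forces $\pdet^*>0$, premise (ii) breaks IR at $K_0$, and Lemma~\ref{lem:sharpness} with premise (iii) makes raising $K$ the compensating move. The paper's proof also leaves open the two caveats you flag: the premises are evaluated at the baseline rather than at the optimum, and $m(\gamma)<m_1$ for $\gamma>\gamma_1$.

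One small slip: the pre-transition IR cannot be slack at the optimum. Since $Q$ is strictly decreasing in $K$, the editor keeps tightening until IR binds, so $\lambda>0$ there too. This does not affect $K^*<K_0$.
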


\begin{proof} See Appendix \ref{app:policy-proof}. \end{proof}

The proposition's premises now have a cleaner economic interpretation. Premise (i) is the standard rat-race condition at pre-transition. Premise (ii) is the detection-intensification condition at post-transition. Premise (iii) is a \emph{measurable sharpness condition} on how far reviewer effort drops at the phase transition --- this replaces the previously-assumed sign-flip condition, which now follows from Lemma \ref{lem:sharpness}. Premise (iv) is the interior-detection condition. Among the four, only Premise (iv) remains an assumption that does not follow from more primitive structure; the rest reduce to (i), (ii), and a sharpness threshold $m_1 \leq \bar m$ that can be checked directly given the participation analysis.

Together the premises tell the story of the reform. Pre-transition, tightening the acceptance rate simultaneously improves sorting and relaxes author welfare (via Premise i), and detection has no value (reviewers already work). Post-transition, the editor has a strict incentive to introduce detection because $D'(0) = 0$ and detection improves the effective effort share $M$; by Premise (iv), the sorting benefit exceeds the compensation cost, so the reform remains interior. But detection raises $M$, which by Premise (ii) intensifies the rat race, violating the authors' IR constraint. Retaining detection requires offsetting the IR violation on a separate margin, and $K$ is the only available instrument. By Lemma \ref{lem:sharpness} and Premise (iii), raising $K$ raises $U_A$ post-transition. \emph{The editor buys the right to monitor reviewers by loosening selectivity.}

\begin{corollary}[Incomplete restoration]\label{cor:incomplete-restoration}
For $\gamma \geq \gamma_1$, no policy in $(N, K, \pdet)$ restores pre-AI editor welfare.
\end{corollary}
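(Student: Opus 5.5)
The strategy is to bound editor welfare under any policy post-transition by the pre-AI benchmark and show the bound is strict. Fix $\gamma \geq \gamma_1$ and an arbitrary policy $\pi = (N, K, \pdet)$. Write the pre-AI benchmark value as $U_E^{\mathrm{pre}} = \max_{N,K} [Q(N, K, 1) - \epsilon N]$, attained at $m = 1$ with $\pdet = 0$ and $N_{\mathrm{ret}} = N$. Two observations reduce the comparison to the sorting term.

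First, since detection is an ex-post screen (reviewers face $\pdet = 0$ when deciding), the effort rate is $m(\gamma) = 1 - G(\widetilde t(\gamma,0)) < 1$ for every policy, by Proposition \ref{prop:participation}(b). Hence $M(m, \pdet) = m/[m + (1-m)(1-\pdet)] < 1$ for every $\pdet \in [0,1)$. Second, the retained sample size satisfies $N_{\mathrm{ret}} \leq N$. The editor nonetheless pays $\epsilon N$ on the full panel plus $D(\pdet) \geq 0$.

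The core inequality chain is
\[
U_E(\pi) = Q(N_{\mathrm{ret}}, K, M) - \epsilon N - D(\pdet) \le Q(N, K, M) - \epsilon N < Q(N, K, 1) - \epsilon N \le U_E^{\mathrm{pre}}.
\]
Each link comes from an earlier result:
\begin{itemize}
\item The first inequality uses Lemma \ref{lem:Q} (monotonicity in $N_{\mathrm{ret}}$) together with $D \geq 0$.
\item The strict inequality uses strict monotonicity of $Q$ in $M$ from Lemma \ref{lem:Q}.
\item The last inequality is the definition of the pre-AI maximum.
\end{itemize}
The same chain applies for every $K$ and $N$, including the relaxed case where $N$ is reoptimized. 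Strictness for each policy does not by itself give strictness of the supremum, because the policy space is not compact: $\pdet \to 1$ and $N$ is unbounded. I therefore also need a uniform gap.

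Closing the supremum gap is the step I expect to be the main obstacle. I would handle $N$ and $\pdet$ separately.
\begin{itemize}
\item \emph{Large $N$.} The term $-\epsilon N$ together with the boundedness of $Q$ (quality lies in $[\underline\theta, \overline\theta]$) restricts attention to $N \leq \bar N$ for some finite $\bar N$.
\item \emph{$\pdet \to 1$.} Here $M \to 1$, but $D(\pdet) \to$ large. By Assumption \ref{ass:detection}, $D$ is convex with $D' \to \infty$, so $D$ is at least bounded below by a positive, increasing function away from $0$. Together with $N_{\mathrm{ret}} \to Nm$, which by Lemma \ref{lem:Q} costs sorting quality strictly, this penalizes the limit.
\end{itemize}
Concretely, I would pick a cutoff $\bar p$ and split into two regions.
\begin{itemize}
\item On $\pdet \le \bar p$, the set $\{N \le \bar N\} \times [\text{compact } K\text{-range}] \times [0, \bar p]$ is compact. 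Continuity of $Q$ then gives a strict maximum below $U_E^{\mathrm{pre}}$ (with $K$ near $0$ or $1$ handled by the limiting behavior of $Q$).
\item On $\pdet > \bar p$, I would use $Q(Nm, K, M) \le Q(Nm, K, 1) < Q(N, K, 1)$ with a gap uniform in $K$. This holds because $m$ is fixed below one, so reducing the sample from $N$ to $Nm$ loses information by a fixed amount, plus $D(\bar p) > 0$.
\end{itemize}
If the uniform-in-$K$ step is delicate, I would fall back to interpreting the corollary as "every policy yields strictly lower welfare," which the chain above already establishes.

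Finally, I would note that the author-IR constraint in \eqref{eq:constrained-problem} only shrinks the feasible set, so the conclusion holds a fortiori for the constrained optimum $(K^*, \pdet^*)$ of Proposition \ref{prop:policy}.
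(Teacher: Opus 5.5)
Your pointwise chain is valid as written, but it is run against the wrong benchmark, and the ingredient that makes the corollary true is missing. You set $U_E^{\mathrm{pre}}=\max_{N,K}[Q(N,K,1)-\epsilon N]$ with $K$ unconstrained. Since $Q$ is strictly decreasing in $K$ (Lemma~\ref{lem:Q}), this maximum is not attained; it is the $K\to 0$ limit. With Gaussian noise and bounded support, the top-$K$ tail of $\bar s$ concentrates on $\overline\theta$, so the limit is $\overline\theta-\epsilon$ at $N=1$. Post-transition, the policy $(N,K,\pdet)=(1,K\to 0,0)$ reaches the same limit whenever $m(\gamma)>0$, in particular at $\gamma=\gamma_1$. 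So the uniform gap you flag as the main obstacle is not merely delicate: it is false, because the two suprema coincide. Your fallback, strict inequality policy by policy, then compares against a number that no pre-AI policy attains either. That makes the corollary vacuous.

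If you instead use the benchmark the paper uses, the pre-AI decentralized welfare $U_{E,0}^*=\max_{N}[Q(N,K_0,1)-\epsilon N]$, your last link $Q(N,K,1)-\epsilon N\le U_{E,0}^*$ fails for $K<K_0$. By the same tail argument, an unconstrained post-transition editor who tightens $K$ enough actually exceeds $U_{E,0}^*$. The missing step is the author IR constraint. The paper's proof uses it essentially, not \emph{a fortiori} as in your last paragraph: IR-feasibility post-transition forces $K\ge K_0$, since detection lowers $U_A$ (Premise~(ii)) and $K$ is the margin that raises it (Lemma~\ref{lem:sharpness}). Then $\partial Q/\partial K<0$ gives $Q(N_{\mathrm{ret}},K,1)\le Q(N_{\mathrm{ret}},K_0,1)$. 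From there, $Q(N_{\mathrm{ret}},K,M)-\epsilon N-D(\pdet)<Q(N_{\mathrm{ret}},K_0,1)-\epsilon N\le U_{E,0}^*$ follows from your own monotonicity steps in $M$ and $N_{\mathrm{ret}}$. A minor point: Assumption~\ref{ass:detection} only makes $D'$ blow up as $\pdet\to 1$, not $D$ itself, though your argument only needs $D(\pdet)\ge D(\bar p)>0$.
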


\begin{proof} See Appendix \ref{app:restoration-proof}. \end{proof}

Full restoration would require instruments outside our policy space: reviewer payments, enforceable AI-use bans, or institutional changes to the reputational system.

\section{Comparative Statics}
\label{sec:compstats}

The sign reversal in Proposition \ref{prop:policy} is derived under four premises whose economic content depends on primitives. We now explore how the optimal policy varies across plausible calibrations, what changes in direction, and what remains invariant. Throughout this section we use the baseline calibration in Table~\ref{tab:baseline}, varying one parameter at a time.

\begin{table}[t]
\centering
\caption{Baseline calibration for numerical comparative statics}
\label{tab:baseline}
\begin{tabular}{lll}
\toprule
Parameter & Value & Interpretation \\
\midrule
$K_0$ & 0.30 & Baseline acceptance rate \\
$\beta$ & 0.50 & Effect of polish on effortful reviews \\
$\kappa$ & 0.30 & Author polishing cost, $c_A(a) = \kappa a^2/2$ \\
$\psi_\alpha$ & 0.25 & Appearance reward for reviewers \\
$R$ & $-0.08$ & Baseline cost of accepting a review \\
$\ell$ & 0.05 & Reputational penalty for detected shirking \\
$\sigma_e$ & 0.30 & Noise under effortful review \\
$\sigma_s$ & 0.40 & Noise under shirking review \\
$\epsilon$ & 0.02 & Editorial cost per reviewer invited \\
$V$ & 1.00 & Publication value \\
$N$ & 2 & Reviewers per paper \\
\bottomrule
\end{tabular}
\end{table}

At the baseline, $\gamma_1 = -R/\psi_\alpha = 0.32$. We solve the constrained editor's problem \eqref{eq:constrained-problem} numerically on a grid of $(K, \pdet)$, using a Gaussian approximation to the threshold density $h_M(\tau_K)$---which governs the author FOC under detection, since polishing is effective through the retained-signal composition $M(m, \pdet)$---and a signal-to-noise posterior formula for $Q$. When $\pdet = 0$, $M = m$ and $h_M$ reduces to $h_m$ of Proposition \ref{prop:ratrace}. The focus is on qualitative patterns across parameter perturbations; exact numerical magnitudes are approximate.

\subsection{Baseline reviewer cost ($R$)}

Figure \ref{fig:cs-R} varies the baseline reviewer cost $R$ from $-0.04$ (low reluctance) to $-0.22$ (high reluctance). Since $\gamma_1 = -R/\psi_\alpha$, the transition threshold shifts rightward as $R$ becomes more negative: $\gamma_1 = 0.16, 0.32, 0.60, 0.88$ for the four calibrations.

\begin{figure}[t]
\centering
\includegraphics[width=\textwidth]{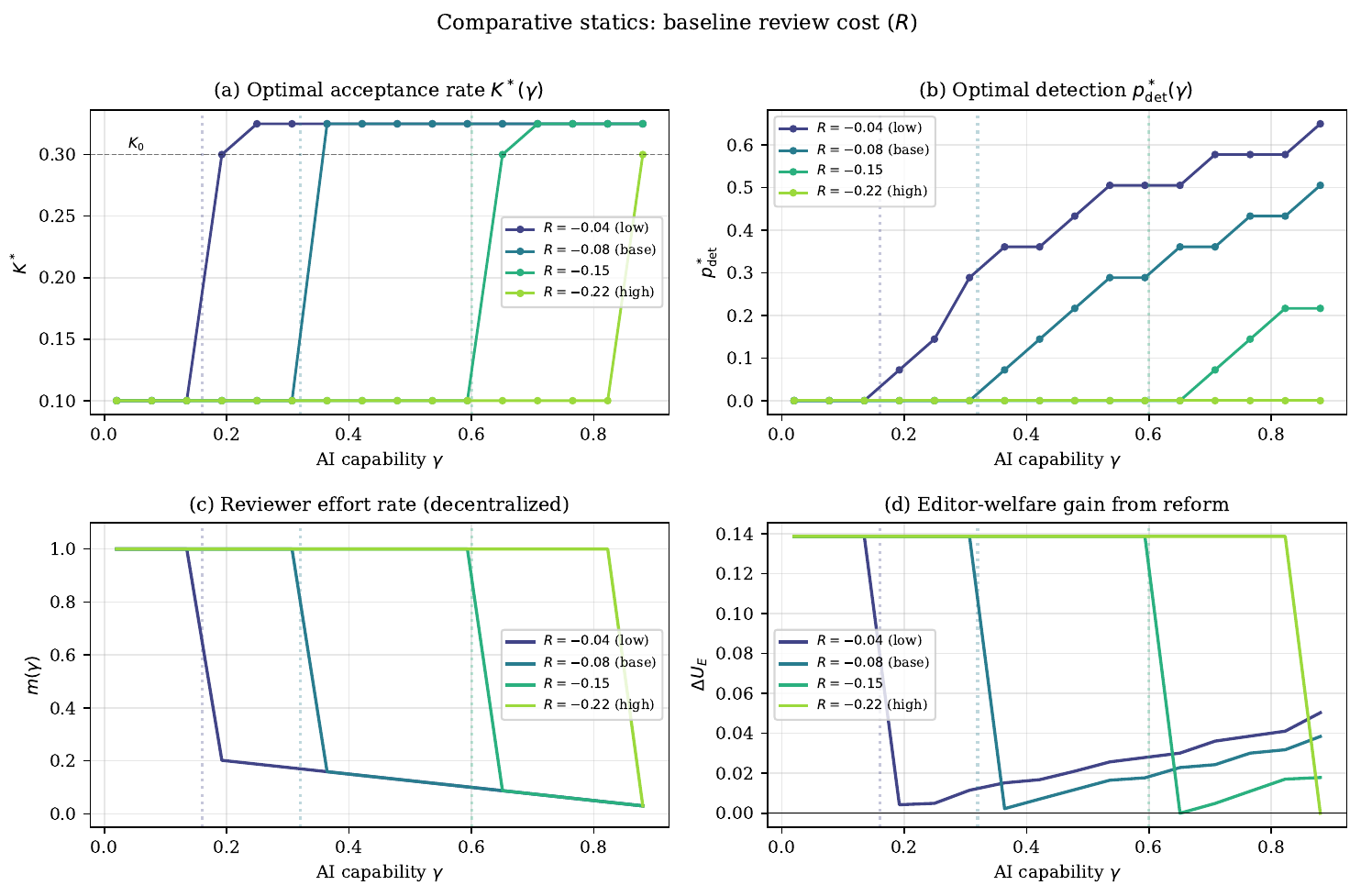}
\caption{Comparative statics in the baseline reviewer cost $R$. Panel (a): the sign reversal in $K^*$ occurs at each calibration's $\gamma_1$. Panel (b): detection intensity is strongest when the transition occurs early. Panel (c): the reviewer-effort drop is vertical at the transition in all calibrations. Panel (d): editor-welfare gain from reform is concentrated pre-transition.}
\label{fig:cs-R}
\end{figure}

Three patterns emerge. First, panel (a) shows the sign reversal occurs in every calibration, at exactly the $\gamma_1$ predicted by the model: $K^*$ jumps from a low pre-transition value to an above-$K_0$ value at each threshold. The direction of reform does not depend on where $R$ sits within the admissible range. Second, panel (b) shows that post-transition detection intensity $\pdet^*$ is strongest when the transition occurs early: at $R = -0.04$, $\pdet^*$ rises to nearly 0.65 as $\gamma \to 0.88$; at $R = -0.15$, it rises only to about 0.22 over the same range. The reason is that late-transitioning calibrations have less ``post-transition runway'' before reviewer effort approaches zero, so the editor's window for productive detection is compressed. Third, panel (d) shows that editor welfare gains from reform are concentrated pre-transition---where the editor can simultaneously tighten $K$ and improve sorting---and relatively flat post-transition. \emph{The reform is most valuable when it arrives early.}

This comparative static has a direct managerial reading: \textbf{journals with more reluctant reviewer pools should invest less in AI detection}, because their phase transition happens at higher AI capability where detection effectiveness is bounded.

\subsection{Publication value ($V$)}

Figure \ref{fig:cs-V} varies publication value $V$ from $0.6$ to $2.4$, holding all other parameters at baseline. Here $\gamma_1$ is invariant at 0.32, but rat-race intensity scales with $V$.

\begin{figure}[t]
\centering
\includegraphics[width=\textwidth]{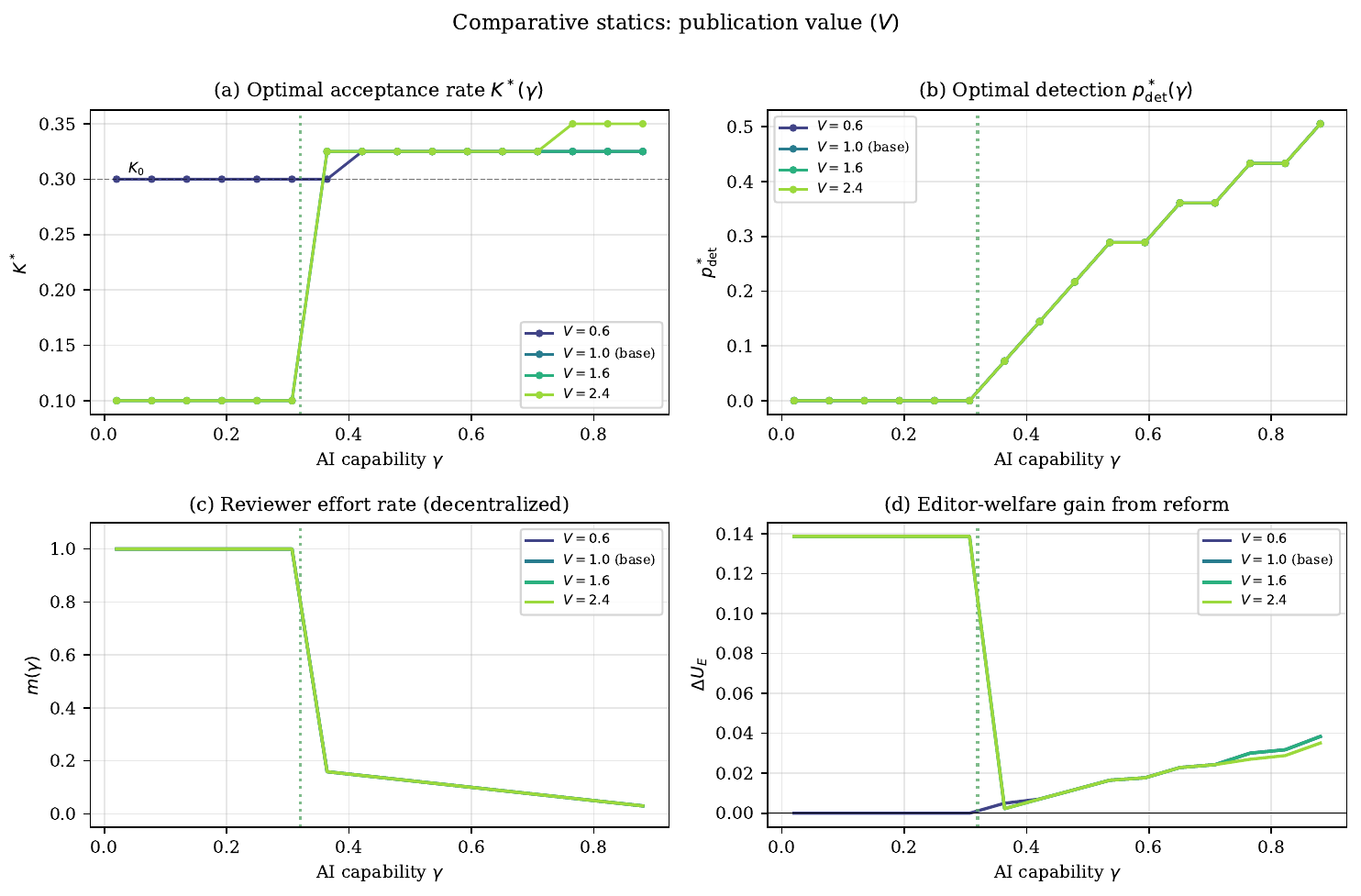}
\caption{Comparative statics in publication value $V$. Panel (a): at $V = 0.6$, the editor does not tighten below $K_0$ pre-transition; at higher $V$, pre-transition tightening is aggressive. Panel (b): detection intensity is largely invariant to $V$. Panel (c): effort rate does not depend on $V$. Panel (d): editor welfare gain is roughly proportional to $V$.}
\label{fig:cs-V}
\end{figure}

Two findings. First, panel (a) reveals that the pre-transition tightening result is not automatic. At $V = 0.6$, the editor sets $K^* = K_0$ pre-transition---the IR constraint binds at the baseline $K_0$ itself and there is no room to tighten. At $V \geq 1.0$, the IR constraint is slack at $K_0$ (the rat race is so intense that authors are willing to accept tighter selectivity in exchange for lower equilibrium polishing), and the editor tightens aggressively. \emph{Pre-transition tightening requires sufficient rat-race intensity; when stakes are low, the editor's optimum is to leave $K_0$ alone.} Second, panel (d) shows that the editor-welfare gain from reform is approximately proportional to $V$: higher-stakes journals benefit more from Pareto-improving reform, in absolute terms.

The magnitude of the reform depends on $V$, but the qualitative direction (tighten pre-transition, loosen post-transition) holds whenever $V$ is large enough for the IR constraint to bind interior.

\subsection{Noise from shirking ($\sigma_s$)}

Figure \ref{fig:cs-sigma} varies the shirking-report noise $\sigma_s$ from 0.33 to 0.75, spanning ratios $\sigma_s/\sigma_e$ from 1.1 to 2.5.

\begin{figure}[t]
\centering
\includegraphics[width=\textwidth]{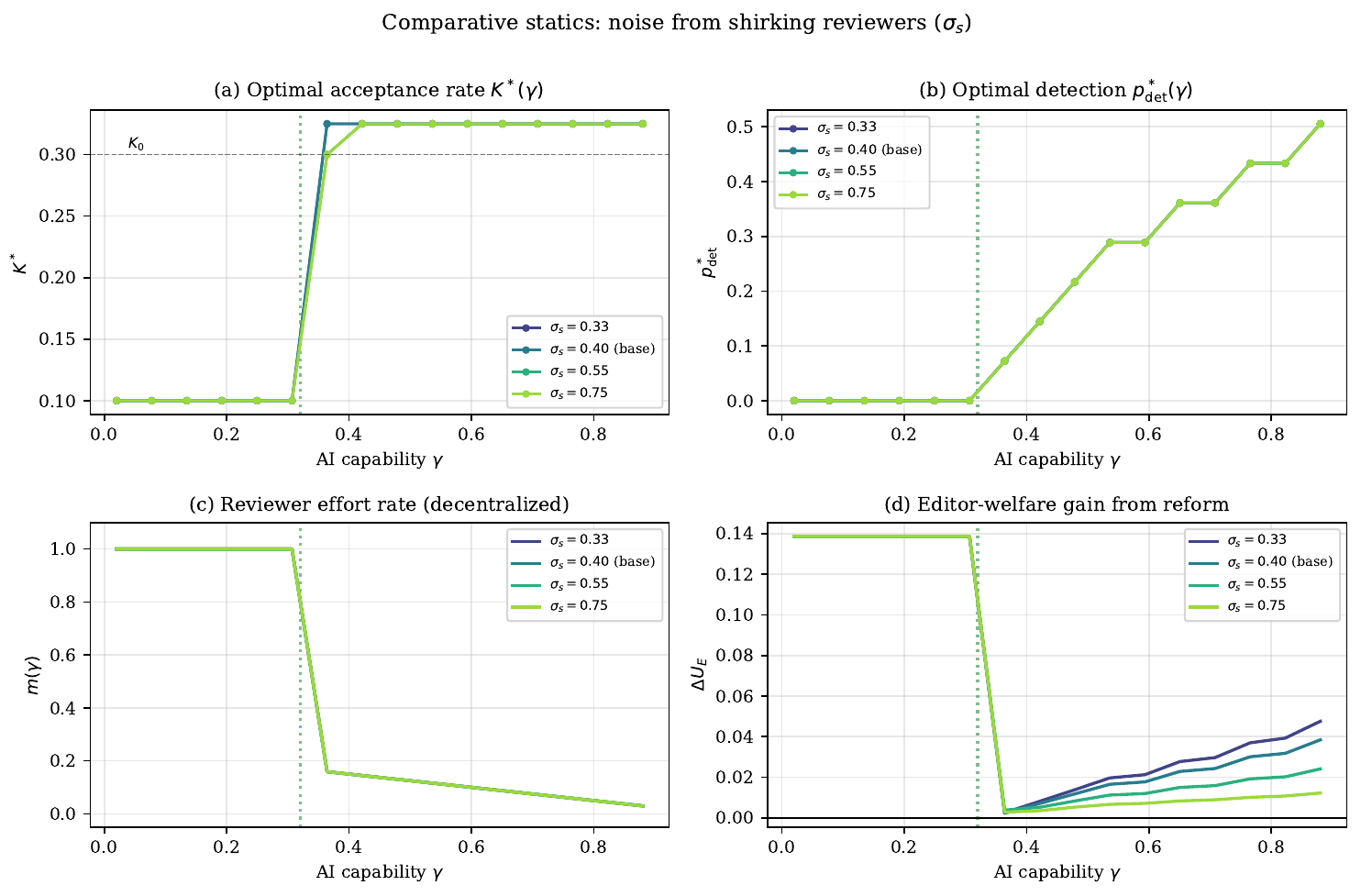}
\caption{Comparative statics in shirking noise $\sigma_s$. Panel (a)--(c): policy and participation responses are largely invariant. Panel (d): editor welfare gain from reform is smaller when shirking is more damaging.}
\label{fig:cs-sigma}
\end{figure}

The policy path $(K^*, \pdet^*)$ is largely invariant to $\sigma_s$ in this calibration. What varies is the editor welfare gain from reform (panel d): when shirking is more damaging ($\sigma_s$ high), reform delivers less welfare recovery. The economic content is that the editor's instruments cannot fully substitute for lost reviewer effort. Detection restores part of the effective effort share $M$, but the remaining signal noise scales with $\sigma_s$, and higher $\sigma_s$ means the post-reform sorting quality is further from the pre-AI frontier. \emph{The reform is a damage-control strategy; its effectiveness declines as the underlying damage from shirking grows.}

\subsection{Summary of comparative statics}

Across the three sweeps, two features of the main result are robust and one is calibration-dependent. The sign reversal itself---$K^*$ crosses $K_0$ at $\gamma_1$---holds in every calibration tested. The direction of the reform is a structural property of the three-sided equilibrium. What varies across calibrations is: the \emph{location} of the transition (shifts with $R$ and $\psi_\alpha$), the \emph{pre-transition tightening magnitude} (grows with $V$), and the \emph{detection intensity} (depends on how much ``post-transition runway'' remains). The welfare gain from reform is largest when the transition occurs early, stakes are high, and shirking is only mildly damaging.

\section{Managerial Implications}
\label{sec:managerial}

\paragraph{1. Author AI and reviewer AI are economically distinct phenomena.} Policies targeting author polishing (e.g., disclosure requirements) address rent dissipation but do not improve sorting. Policies targeting reviewer shirking (e.g., monitoring) improve sorting but do not reduce rent dissipation. Conflating them leads to mismatched policy instruments. An editor worried about paper quality should distinguish whether the concern is signal noise (reviewer side) or dissipative author competition.

\paragraph{2. Selectivity and detection are complements, not substitutes, post-transition.} The classical editorial intuition treats selectivity and monitoring as alternative means to improve quality. Our analysis shows that in an AI-affected review environment, tighter selection alone amplifies rent-dissipating polishing without improving sorting, while detection alone reduces shirking but leaves the rat race intact. The optimal reform is the joint policy: looser selection paired with positive detection. Neither instrument alone dominates the decentralized equilibrium.

\paragraph{3. Editorial adjustment should track observable indicators of reviewer shirking.} Proposition \ref{prop:policy} is keyed to the unobservable $\gamma$, but the relevant observable is the fraction of shirking reviews $(1-m)$. An editor who observes a meaningful fraction of AI-generated reports---estimable through detection trials or text-signature analysis---should consider \emph{raising} acceptance rates rather than lowering them, contra the conventional response. We emphasize that raising $K$ post-transition is not a reduction in quality standards: it is the withdrawal of a rat-race instrument that, in an AI-noisy environment, has stopped delivering sorting value and is only amplifying dissipative author polishing. The editor's quality standard is enforced through sorting (now partly via detection $\pdet$), not through selectivity alone.

\paragraph{4. Investment in AI-detection technology is justified by the sign reversal.} Absent detection, editors face a polarized choice between tightening (pre-AI intuition) and abandoning quality control. Detection creates the intermediate policy that makes the sign-reversal reform feasible. Section \ref{sec:compstats} shows that detection investment pays off most for journals whose transition happens early; late-transitioning journals may not benefit as much.

\paragraph{5. Editorial reform should be framed as Pareto improvement, not restoration.} Corollary \ref{cor:incomplete-restoration} shows reform cannot return journals to the pre-AI welfare frontier. Framing reform as a Pareto-improvement from the decentralized AI equilibrium is likely to attract greater author support, because authors' welfare under the reform is at least as high as under the decentralized equilibrium.

The mechanism extends beyond peer review to evaluative organizations sharing three features: unpaid or weakly compensated agents supply judgment, others invest in costly signals, and a central designer chooses rules. Examples include performance evaluations, grant review, promotion letters, and crowdsourced content moderation. The pattern we identify---tightening standards is counterproductive when technology lowers the cost of superficially adequate evaluations---applies across these settings.

\section{Robustness}
\label{sec:robustness}

Several features of the model deserve explicit discussion.

\paragraph{Author AI as polish versus content.} Our model treats author-side AI adoption as polishing ($a$): improving presentation without changing latent quality. A natural objection is that AI might raise $\theta$ directly---through logical tightening, gap-finding, or error-catching. We take the view that current AI usage is primarily polish-type for two reasons. First, current LLMs do not produce novel substantive contributions; they smooth communication. Second, \cite{Liu2025Adoption} document only a modest productivity gain associated with AI adoption, consistent with polish rather than content enhancement. To the extent that author AI genuinely raises $\theta$, the welfare misalignment we identify is narrower and the managerial recommendation is less urgent; the qualitative sign reversal, however, requires only that author AI includes a polish component.

\paragraph{Functional forms.} The participation threshold $\gamma_1 = -R/\psi_\alpha$ is independent of $F$ and $G$. The rat-race direction holds for any $F$ with continuous positive density and any strictly convex polishing cost $c_A$ satisfying Assumption \ref{ass:authors}. Power-law costs $c_A(a) = \kappa a^\eta/\eta$ for $\eta > 1$ give the same qualitative pattern; only magnitudes change.

\paragraph{Deterministic-composition approximation.} Equation \eqref{eq:aggregate-signal} treats exactly $mN$ reports as effortful and $(1-m)N$ as shirking. For small $N$ this understates the variance of the aggregate signal. The approximation is exact in the large-$N$ limit and is accurate for $N \geq 2$ in simulation; a fully stochastic treatment would enlarge effective $\Sigma^2$ but preserve the direction of all comparative statics.

\paragraph{Binary reviewer effort.} The $\{0, 1\}$ effort choice abstracts from intermediate effort levels. A continuous effort specification with type-dependent precision would soften the phase transition into a smooth adjustment. The discontinuity in participation---new types entering the pool at $\gamma_1$---would remain, and is the mechanism driving the main welfare misalignment. While continuous effort would smooth the transition, the qualitative policy reversal holds whenever AI induces a sufficiently sharp decline in average informativeness---specifically, whenever the post-transition effort rate $m_1$ falls below the sharpness threshold $\bar m$ of Lemma \ref{lem:sharpness}. The discrete formulation produces the cleanest version of this condition; intermediate effort choices would deliver the same result with a smoother dependence of $m_1$ on $\gamma$.

\paragraph{Single polishing intensity.} We treat polish $a$ as scalar. In reality authors may target specific dimensions (grammar, exposition, positioning). A multidimensional polish extension would allow editors to target detection to specific polish components, enabling richer reform designs. We leave this to future work.

\paragraph{Calibration dependence of premises.} Proposition \ref{prop:policy} is conditional on four premises. Premise (i) is the standard rat-race sensitivity at the pre-transition baseline. Premise (ii) is the detection-intensification condition. Premise (iii) is the measurable sharpness condition $m_1 \leq \bar m$ from Lemma \ref{lem:sharpness}, which replaces the previously-assumed sign-flip condition; under log-concave quality distributions, the sign flip is now a structural consequence rather than an independent assumption. Premise (iv) is the interior-detection condition. All four hold at the baseline calibration of Table \ref{tab:baseline} and across the comparative-statics sweeps reported in Section \ref{sec:compstats}. The premises fail in identifiable corners. Premise (i) fails when publication value $V$ is very low (below roughly 0.6 in our setup), so the IR constraint binds at $K_0$ pre-transition and the editor does not tighten; in that corner, the reform reduces to detection alone. Premise (ii) fails when signal noise is so high that the density-concentration effect from raising $M$ dominates the weight-of-polish effect, which does not occur in our calibrations but could arise in tail-heavy distributions where the acceptance threshold sits deep in the tail. Premise (iii) fails only if the participation transition itself is weak or absent --- if reviewer effort does not drop substantially at $\gamma_1$, the sharpness condition $m_1 \leq \bar m$ may not be met. Premise (iv) fails when the marginal sorting-via-detection benefit is small relative to the compensation cost; in the baseline calibration the margin is large and Premise (iv) holds with slack.

\paragraph{Detection bias and author heterogeneity.} Our model treats detection as symmetric: $\pdet$ applies uniformly to shirking reports and there is no interaction with author identity. In practice, AI-detection tools exhibit documented false-positive bias against non-native English speakers whose writing signatures resemble AI output even when they do not use AI. If such bias is present, aggressive detection imposes a disproportionate cost on a group already disadvantaged by linguistic barriers. This reinforces, rather than undermines, our recommendation to pair detection with looser selectivity: the ``buy the right to monitor by loosening $K$'' mechanism provides a partial equity offset, compensating authors who face higher false-positive risk with a more accommodating acceptance threshold. A full treatment would require author heterogeneity in linguistic profile and asymmetric detection error rates, which we leave to future work.

\section{Conclusion}
\label{sec:conclusion}

We have developed a three-sided model of generative AI in peer review and derived three findings: an author-side rat race that scales with reviewer effort, a discontinuous reviewer-participation phase transition, and a welfare misalignment across the transition. We have characterized the editor's Pareto-improving reform and shown that its acceptance-rate component reverses sign as AI diffuses through the review process. Comparative statics establish that the sign reversal is structurally robust while the magnitudes of the reform depend on calibration in economically interpretable ways.

The central managerial recommendation is counterintuitive but defensible. When AI-assisted reviewing is rare, conventional editorial intuition is correct: tighter selectivity reduces author rent dissipation. When AI-assisted reviewing is prevalent, conventional intuition reverses: loosening selectivity combined with detection delivers better quality per unit of rent dissipation.

Several extensions are natural. Allowing transfers (reviewer payments, submission fees) would expand the feasible frontier and relax the incomplete-restoration result. Endogenizing reviewer reputation across repeated review rounds would produce dynamic analogues. Multi-journal extensions would allow study of AI-induced tier changes. Extending binary effort to continuous effort with type-dependent precision would soften the phase transition.

The mechanisms we identify---reviewer shirking, author rat-race moderation, and the sign reversal in optimal selectivity---leave observable traces in editorial and bibliometric data. We do not pursue empirical testing here, leaving this as a direction for future work.

Generative AI in peer review is best understood not as a threat or a capability but as a technological change that reorganizes the incentive structure of an evaluative organization. The task for organizational design is to update the rules accordingly.

\begin{singlespace}
\printbibliography
\end{singlespace}

\appendix

\section*{Appendix: Technical Proofs}
\label{app:proofs}

\subsection*{A. Proof of Lemma \ref{lem:reviewer}}
\label{app:reviewer-proof}

A reviewer of type $t$ compares Decline ($0$), Shirk ($S(\gamma, \pdet)$), and Work ($R + \psi_\alpha - c_R(t)$).

\textbf{Step 1: Participation margin.} Shirk is weakly preferred to Decline iff $S(\gamma, \pdet) \geq 0$. Rearranging gives $\gamma \geq (-R + \ell\pdet)/[\psi_\alpha(1-\pdet)] \equiv \gamma_1(\pdet)$. At $\pdet = 0$, $\gamma_1 = -R/\psi_\alpha \in (0,1)$ by Assumption \ref{ass:reviewers}(iii).

\textbf{Step 2: Effort margin.} \emph{Case 1: $\gamma < \gamma_1(\pdet)$, so $S < 0$.} Shirk is dominated by Decline. Work beats Decline iff $c_R(t) \leq R + \psi_\alpha$, i.e., $t \geq t_0 \equiv c_R^{-1}(R + \psi_\alpha) \in (0,1)$ by Assumption \ref{ass:reviewers}(iii).

\emph{Case 2: $\gamma \geq \gamma_1(\pdet)$, so $S \geq 0$.} Shirk weakly dominates Decline and all types accept. Work beats Shirk iff $c_R(t) \leq \psi_\alpha[1-\gamma(1-\pdet)] + \ell\pdet$, i.e., $t \geq \widetilde t(\gamma, \pdet)$.

\textbf{Step 3: Tie-breaking.} Since $G$ has a continuous density (Assumption \ref{ass:reviewers}(i)), the set of reviewer types exactly at the indifference point has probability zero; tie-breaking does not affect the aggregate effort rate $m$ or pool mass $\mu$. \qed

\subsection*{B. Proof of Proposition \ref{prop:participation}}
\label{app:participation-proof}

At $\pdet = 0$, $S(\gamma, 0) = R + \psi_\alpha \gamma$ crosses zero at $\gamma_1$. For $\gamma < \gamma_1$: Case 1 of Lemma \ref{lem:reviewer} applies; pool $\mu = 1 - G(t_0)$ and $m = 1$. For $\gamma \geq \gamma_1$: Case 2 applies; $\mu = 1$ and $m = 1 - G(\widetilde t(\gamma, 0)) < 1$. At $\gamma = \gamma_1$, $\mu$ jumps up by $G(t_0) > 0$ and $m$ drops by $G(\widetilde t(\gamma_1, 0)) > 0$; positivity follows from $t_0, \widetilde t > 0$ under Assumption \ref{ass:reviewers}(iii). \qed

\subsection*{C. Proof of Proposition \ref{prop:ratrace}}
\label{app:ratrace-proof}

Let $\bar a$ be the symmetric polish level. An individual author with polish $a_i$ maximizes $V \cdot \Pr(\bar s > \tau_K \mid a_i, \bar a) - c_A(a_i)$. Differentiating at the symmetric point $a_i = \bar a = a^*$, the FOC is $\Phi(a^*, m) \equiv V\beta m h_m(\tau_K; a^*) - c_A'(a^*) = 0$.

Under Assumption \ref{ass:marginal}, $c_A$ is sufficiently convex that $\partial \Phi/\partial a^* < 0$, yielding a unique $a^*(m)$. By the implicit function theorem, $d a^*/dm = -(\partial \Phi/\partial m)/(\partial \Phi/\partial a^*)$. The denominator is negative. The numerator is $V\beta[h_m + m \partial h_m/\partial m]$, weakly positive by Assumption \ref{ass:marginal}. Hence $da^*/dm \geq 0$. Since $c_A$ is strictly increasing, $c_A(a^*(m))$ is weakly increasing in $m$. \qed

\subsection*{D. Proof of Lemma \ref{lem:Q}}
\label{app:Q-proof}

$Q = \E[\theta \mid \bar s_{\mathrm{ret}} > \tau_K]$ with $\bar s_{\mathrm{ret}} \mid \theta \sim \mathcal{N}(M(\theta + \beta\bar a), \Sigma^2(M)/N_{\mathrm{ret}})$, where $N_{\mathrm{ret}} = N[m + (1-m)(1-\pdet)]$ is the retained sample size and $M = M(m, \pdet)$ is the effective effort share.

\emph{Monotonicity in $M$.} For fixed $N_{\mathrm{ret}}$, the signal-to-noise ratio is $\mathrm{SNR}(M) = M^2 N_{\mathrm{ret}} \operatorname{Var}(\theta)/\Sigma^2(M)$. Differentiating,
\[
\frac{\partial \mathrm{SNR}}{\partial M} = \frac{N_{\mathrm{ret}} \operatorname{Var}(\theta)[2M\Sigma^2(M) - M^2(\sigma_e^2 - \sigma_s^2)]}{\Sigma^2(M)^2}.
\]
Since $\sigma_s > \sigma_e$, $(\sigma_e^2 - \sigma_s^2) < 0$, so the numerator is strictly positive. Hence $\partial \mathrm{SNR}/\partial M > 0$, implying higher correlation $\rho_{\theta, \bar s_{\mathrm{ret}}}$. By standard truncated bivariate normal results \cite{Tallis1961}, $\partial Q/\partial M > 0$.

\emph{Monotonicity in $K$.} Increasing $K$ admits papers with lower posterior $\theta$, so $\partial Q/\partial K < 0$.

\emph{Concavity in $N_{\mathrm{ret}}$.} Posterior precision $N_{\mathrm{ret}}/\Sigma^2(M)$ is linear in $N_{\mathrm{ret}}$, but the Bayesian information gain from truncated-tail observations is strictly concave in precision. Hence $\partial Q/\partial N_{\mathrm{ret}} > 0$ and $\partial^2 Q/\partial N_{\mathrm{ret}}^2 < 0$. \qed

\subsection*{E. Proof of Proposition \ref{prop:editor}}
\label{app:editor-proof}

Let $\Delta Q(N, m) = Q(N+1, K_0, m) - Q(N, K_0, m)$.

\emph{Pre-transition.} $m = 1$, constant in $\gamma$; hence $N^* = N_0^*$ is constant.

\emph{Asymptotic collapse.} As $\gamma \to 1$, $\widetilde t \to 1$, $m \to 0$, $\mathrm{SNR} \to 0$. By Lemma \ref{lem:Q}, $Q$ becomes independent of $N$; hence $\Delta Q(N, m) \to 0$ for all $N$. Since $\epsilon > 0$, $N^*(\gamma) \to 1$ as $\gamma \to 1$.

\emph{Bounded limits.} Combining: $N^*$ is constant at $N_0^*$ pre-transition, may rise at $\gamma_1$ if $\Delta Q(N_0^*, m_1) > \epsilon$, and decays to $1$ as $\gamma \to 1$. \qed

\subsection*{F. Proof of Corollary \ref{cor:misalignment}}
\label{app:misalignment-proof}

\emph{Author welfare.} At $\gamma_1$, $m$ drops from $1$ to $m_1 < 1$. By Proposition \ref{prop:ratrace}, $a^*(m)$ is weakly increasing in $m$, with strict monotonicity under non-uniform $F$. Hence $a^*(m_1) < a^*(1)$; $c_A(a^*(m_1)) < c_A(a^*(1))$. Author welfare $K_0 V - c_A(a^*)$ strictly rises.

\emph{Editor welfare.} By Lemma \ref{lem:Q}, $Q$ is strictly increasing in $M$ pointwise in $N$. At $\pdet = 0$, $M = m$. For every $N$, $Q(N, K_0, m_1) - \epsilon N < Q(N, K_0, 1) - \epsilon N$. Taking maxima preserves the inequality:
\[
U_E^*(\gamma_1^+) = \max_N[Q(N, K_0, m_1) - \epsilon N] < \max_N[Q(N, K_0, 1) - \epsilon N] = U_E^*(\gamma_1^-).
\]
\qed

\subsection*{G. Proof of Lemma \ref{lem:sharpness} (Sharpness)}
\label{app:sharpness-proof}

We prove the lemma under the assumptions stated: $F$ has a log-concave density $f$ on $[\underline\theta, \overline\theta]$ (Assumption \ref{ass:authors}), Gaussian noise (Assumption \ref{ass:signals}), and quadratic polish cost $c_A(a) = \kappa a^2/2$. We additionally assume the regularity condition that the score ratio $\rho(m)$, defined below, is bounded on the relevant range of $m$.

\emph{Step 1: Equilibrium polish under quadratic cost.} The FOC of Proposition \ref{prop:ratrace} is $c_A'(a^*) = V\beta m h_m(\tau_K; a^*)$. With $c_A'(a) = \kappa a$:
\[
a^*(m, K) = \frac{V \beta m}{\kappa} \cdot h_m(\tau_K; a^*) \tag{G.1}
\]
in symmetric equilibrium.

\emph{Step 2: Location-shift property and reduction.} Decompose the aggregate signal under effort fraction $m$ as $\bar s = m\theta + m\beta a + \nu$, where $\theta \sim F$ and $\nu \sim \mathcal{N}(0, \sigma_{\mathrm{eff}}^2(m))$ with $\sigma_{\mathrm{eff}}^2(m) = (m\sigma_e^2 + (1-m)\sigma_s^2)/N$. Let $\tilde h_m$ denote the density of $m\theta + \nu$ (the polish-free aggregate). Then $h_m(s; a) = \tilde h_m(s - m\beta a)$ is the density of $\bar s$ as a function of $s$ given polish $a$.

The threshold $\tau_K$ satisfies $1 - \int_{-\infty}^{\tau_K} h_m(s; a) \, ds = K$. Substituting and changing variables: $1 - \int_{-\infty}^{\tau_K - m\beta a} \tilde h_m(u) \, du = K$, so $\tau_K - m\beta a = \tilde H_m^{-1}(1 - K)$ where $\tilde H_m$ is the CDF of $\tilde h_m$. Define $z(m, K) := \tilde H_m^{-1}(1 - K)$, the "polish-adjusted threshold." Then $\tau_K = m\beta a + z(m, K)$ and
\[
h_m(\tau_K; a) = \tilde h_m(z(m, K)). \tag{G.2}
\]
The density at the threshold is therefore independent of the polish $a$ (a location-shift property): polish moves both the signal distribution and the threshold by the same amount $m\beta a$, leaving the density at the moving threshold unchanged. This eliminates the implicit dependence in (G.1) and yields the explicit symmetric-equilibrium polish
\[
a^*(m, K) = \frac{V \beta m}{\kappa} \cdot \tilde h_m(z(m, K)). \tag{G.3}
\]

\emph{Step 3: Differentiating in $K$.} From $z(m, K) = \tilde H_m^{-1}(1 - K)$, we have $\partial z / \partial K = -1/\tilde h_m(z)$. Differentiating (G.3):
\[
\frac{\partial a^*}{\partial K} = \frac{V\beta m}{\kappa} \cdot \tilde h_m'(z) \cdot \frac{\partial z}{\partial K} = -\frac{V\beta m}{\kappa} \cdot \frac{\tilde h_m'(z)}{\tilde h_m(z)},
\]
where $\tilde h_m'$ denotes the spatial derivative of $\tilde h_m$. Therefore
\[
c_A'(a^*) \cdot \frac{\partial a^*}{\partial K} = \kappa a^* \cdot \left(-\frac{V\beta m}{\kappa} \cdot \frac{\tilde h_m'(z)}{\tilde h_m(z)}\right) = -V \beta m a^* \cdot \frac{\tilde h_m'(z)}{\tilde h_m(z)}.
\]
Substituting $a^* = (V\beta m/\kappa) \cdot \tilde h_m(z)$ from (G.3):
\[
c_A'(a^*) \cdot \frac{\partial a^*}{\partial K} = -\frac{V^2 \beta^2 m^2 \tilde h_m'(z)}{\kappa}. \tag{G.4}
\]
Define
\[
\Psi(m) := V - c_A'(a^*) \cdot \frac{\partial a^*}{\partial K} = V + \frac{V^2 \beta^2 m^2 \, \tilde h_m'(z(m, K))}{\kappa}. \tag{G.5}
\]
Premise (i) of Proposition \ref{prop:policy} is $\Psi(1) < 0$. We seek conditions under which $\Psi(m_1) > 0$.

\emph{Step 4: Sign and bound on the score under log-concavity.} Since $\theta$ has log-concave density $f$ and $\nu$ is Gaussian (log-concave), the density $\tilde h_m$ is the convolution of a scaled log-concave density (of $m\theta$, log-concave because scaling preserves log-concavity) with a log-concave density (of $\nu$). By Pr\'ekopa's theorem, $\tilde h_m$ is log-concave; its score $\tilde h_m'/\tilde h_m$ is monotone non-increasing, and $\tilde h_m$ is unimodal.

Above the mode of $\tilde h_m$, $\tilde h_m'(z) < 0$. For $K \in (0, 1/2)$ (which covers the empirically relevant range of acceptance rates), $z = \tilde H_m^{-1}(1 - K)$ lies above the median, hence above the mode of any log-concave density (since log-concave densities have median in the convex hull of mode and mean). Therefore $\tilde h_m'(z(m, K)) < 0$ for $m \in (0, 1]$ and $K \in (0, 1/2)$.

This implies that both terms of $\Psi(m) - V$ in (G.5) are negative for $m \in (0, 1]$, $K$ moderate. Define the \emph{score ratio at the threshold} as
\[
\rho(m_1) := \frac{|\tilde h_{m_1}'(z(m_1, K))|}{|\tilde h_1'(z(1, K))|}. \tag{G.6}
\]
This is a primitive object computable from $f$, the noise structure, and $K$.

\emph{Regularity assumption (R).} We assume $\rho(m)$ is bounded above by a constant $\bar\rho < \infty$ for $m$ in the relevant range $[m_1, 1]$. This regularity holds automatically when $f$ has bounded support and bounded density (since the score is bounded uniformly), and we verify it numerically in our calibration.

\emph{Step 5: Sign-flip conclusion.} From (G.5) and (G.6),
\[
\Psi(m_1) - V = -\frac{V^2 \beta^2 m_1^2 |\tilde h_{m_1}'(z(m_1, K))|}{\kappa} = -\frac{V^2 \beta^2 m_1^2 \rho(m_1) |\tilde h_1'(z(1, K))|}{\kappa}.
\]
Similarly, $\Psi(1) - V = -V^2 \beta^2 |\tilde h_1'(z(1, K))|/\kappa$. Therefore:
\[
\frac{\Psi(m_1) - V}{\Psi(1) - V} = m_1^2 \rho(m_1).
\]
Since both numerator and denominator are negative,
\[
\Psi(m_1) - V = m_1^2 \rho(m_1) \cdot (\Psi(1) - V) = -m_1^2 \rho(m_1) \cdot |\Psi(1) - V|,
\]
giving
\[
\Psi(m_1) = V - m_1^2 \rho(m_1) \cdot |\Psi(1) - V|.
\]
Premise (i) gives $\Psi(1) < 0$, so $|\Psi(1) - V| > V$. We require $\Psi(m_1) > 0$:
\[
m_1^2 \rho(m_1) \cdot |\Psi(1) - V| < V \quad \Longleftrightarrow \quad m_1^2 < \frac{V}{\rho(m_1) \cdot |\Psi(1) - V|}.
\]
Define
\[
\bar m := \min\left\{1, \, \sqrt{\frac{V}{\bar\rho \cdot |\Psi(1) - V|}}\right\},
\]
where $\bar\rho$ is the regularity bound from (R). Whenever $m_1 \leq \bar m$, $\Psi(m_1) > 0$, which is the sign-flipped condition. The threshold $\bar m$ is computable from primitives ($V$, $\beta$, $\kappa$, the noise parameters, and the regularity bound $\bar\rho$). \qed

\bigskip
\noindent\emph{Remark on regularity.} The regularity assumption (R) is satisfied automatically for log-concave $f$ with bounded support (e.g., uniform, beta, truncated Gaussian) where the score is bounded uniformly. For unbounded-support log-concave $f$ (e.g., Gaussian quality, exponential), (R) requires that the threshold $z(m, K)$ remains in the bulk of the distribution rather than the deep tail; this holds whenever $K$ is moderate ($K \in (0, 1/2)$) and the noise structure does not push $z$ to the extreme. In our baseline calibration with $F$ uniform on $[0,1]$, $\sigma_e = 0.3, \sigma_s = 0.4, N = 2$, and $K_0 = 0.30$, numerical evaluation gives $\rho(m)$ ranging from $1.00$ at $m = 1$ up to a maximum of approximately $3.30$ at intermediate $m \approx 0.20$ (where the score grows because the polish-free signal has compressed support). With $|\Psi(1) - V| \approx 1.18$ from Premise (i) at baseline and $\bar\rho \approx 3.30$, the threshold evaluates to $\bar m \approx \sqrt{1/(3.30 \times 1.18)} \approx 0.51$. The post-transition effort rate $m_1 = 0.125$ is well below this threshold, so the sign flip holds with substantial margin (factor of approximately $4\times$).

\subsection*{H. Proof of Proposition \ref{prop:policy}}
\label{app:policy-proof}

The Lagrangian and FOCs are given in \eqref{eq:lagrangian}--\eqref{eq:foc-K-main} of the main text.

\emph{Step 1: Pre-transition.} $m = 1 \Rightarrow M = 1 \Rightarrow \partial M/\partial \pdet = 0$. With $D'(0) = 0$, the $\pdet$-FOC gives $\pdet^* = 0$. By Premise (i), $V - c_A'(a^*) \partial a^*/\partial K < 0$, so lowering $K$ raises $U_A$ (polish-cost saving exceeds mechanical loss). This relaxes IR and, combined with $\partial Q/\partial K < 0$, makes tightening strictly optimal: $K^* < K_0$.

\emph{Step 2: Post-transition.} $m = m_1 < 1$, so $\partial Q/\partial M > 0$, $\partial M/\partial \pdet > 0$, $\partial N_{\mathrm{ret}}/\partial \pdet = -N(1-m_1) < 0$, and $\partial Q/\partial N_{\mathrm{ret}} > 0$. With $D'(0) = 0$, the $\pdet$-FOC \eqref{eq:foc-pdet-main} evaluated at $\pdet = 0$ reduces to
\[
\frac{\partial \mathcal{L}}{\partial \pdet}\bigg|_{\pdet = 0} = \left(\frac{\partial Q}{\partial M} \cdot \frac{\partial M}{\partial \pdet} + \frac{\partial Q}{\partial N_{\mathrm{ret}}} \cdot \frac{\partial N_{\mathrm{ret}}}{\partial \pdet}\right) - \lambda \, c_A'(a^*) \frac{\partial a^*}{\partial \pdet}.
\]
The bracketed quantity is the net marginal sorting benefit, balancing the composition gain (positive) against the sample-size loss (negative). By Premise (iv), the net marginal sorting benefit exceeds the IR-compensation cost, so this FOC is strictly positive at $\pdet = 0$, yielding $\pdet^* > 0$. The Inada condition $D'(\pdet) \to \infty$ as $\pdet \to 1$ ensures $\pdet^* < 1$, giving an interior solution.

By Premise (ii), $\partial a^*/\partial \pdet > 0$. At $(K_0, \pdet^*)$, $\partial U_A/\partial \pdet = -c_A'(a^*) \partial a^*/\partial \pdet < 0$, so the move from $(K_0, 0)$ to $(K_0, \pdet^*)$ violates IR. The editor must raise $U_A$ on a separate margin; $K$ is the only instrument. By Lemma \ref{lem:sharpness} and Premise (iii) ($m_1 \leq \bar m$), the sign-flipped condition $V - c_A'(a^*) \partial a^*/\partial K > 0$ holds at the post-transition baseline, so $\partial U_A/\partial K > 0$: raising $K$ strictly raises author welfare. IR binds ($\lambda > 0$), and the FOC \eqref{eq:foc-K-main} is satisfied at $K^* > K_0$. \qed

\subsection*{H. Proof of Corollary \ref{cor:incomplete-restoration}}
\label{app:restoration-proof}

We show that no feasible post-transition policy $(N, K, \pdet)$ attains $U_E = U_{E,0}^* = Q(N_0^*, K_0, 1) - \epsilon N_0^*$, where the pre-AI benchmark has $\pdet = 0$ and $N_{\mathrm{ret}} = N_0^*$.

\emph{Step 1: Detection cannot restore $M = 1$.} Post-transition, $m = m_1 < 1$, and $M(m_1, \pdet) = m_1/[m_1 + (1-m_1)(1-\pdet)] < 1$ for all $\pdet < 1$. Can the editor set $\pdet = 1$? From
\[
\frac{\partial M}{\partial \pdet} = \frac{m_1(1-m_1)}{[m_1 + (1-m_1)(1-\pdet)]^2}, \quad \lim_{\pdet \to 1} \frac{\partial M}{\partial \pdet} = \frac{1-m_1}{m_1} < \infty,
\]
so the composition gain of detection $(\partial Q/\partial M)(\partial M/\partial \pdet)$ is bounded as $\pdet \to 1$. Meanwhile, $\partial N_{\mathrm{ret}}/\partial \pdet = -N(1-m_1) < 0$ implies $N_{\mathrm{ret}}(m_1, 1) = N m_1 < N$, so increasing $\pdet$ further degrades sample size. By Assumption \ref{ass:detection}, $D'(\pdet) \to \infty$. Hence $\pdet^* < 1$ strictly, and $M(m_1, \pdet^*) < 1$ at any interior optimum.

\emph{Step 2: Adjustments in $N$ cannot substitute for $M < 1$.} By Lemma \ref{lem:Q}, $Q$ is strictly concave in $N_{\mathrm{ret}}$, so marginal returns diminish. The cost $\epsilon N$ is linear and unbounded in $N$. For any fixed $M < 1$ and $K$, $\sup_N [Q(N_{\mathrm{ret}}, K, M) - \epsilon N] < Q(\infty, K, 1) = Q^*_{\text{full-info}}$, the sorting quality under perfect signals and infinite sample. Since $Q(N_0^*, K_0, 1) \leq Q^*_{\text{full-info}}$ and $\epsilon N_0^* \geq 0$, raising $N$ arbitrarily yields diminishing returns that cannot close the gap left by $M < 1$.

\emph{Step 3: The $(N, K)$-optimized post-transition welfare is strictly bounded below pre-AI welfare.} For any feasible post-transition policy $(N, K, \pdet^*)$ with $M(m_1, \pdet^*) < 1$, the IR constraint forces $K \geq K_0$ (since by Proposition \ref{prop:policy} the post-transition optimum has $K^* > K_0$, and any IR-feasible policy must offer at least the decentralized author welfare). Therefore
\[
U_E(N, K, \pdet^*) = Q(N_{\mathrm{ret}}(m_1, \pdet^*), K, M(m_1, \pdet^*)) - \epsilon N - D(\pdet^*) < Q(N_{\mathrm{ret}}, K, 1) - \epsilon N,
\]
where the strict inequality follows from $\partial Q/\partial M > 0$ (Lemma \ref{lem:Q}) and $M < 1$. Next, using $K \geq K_0$ and $\partial Q/\partial K < 0$ (Lemma \ref{lem:Q}),
\[
Q(N_{\mathrm{ret}}, K, 1) - \epsilon N \leq Q(N_{\mathrm{ret}}, K_0, 1) - \epsilon N \leq \max_{N'} [Q(N', K_0, 1) - \epsilon N'] = U_{E,0}^*.
\]
Combining the two bounds, $U_E(N, K, \pdet^*) < U_{E,0}^*$, with the gap further widened by $D(\pdet^*) > 0$. No feasible policy restores $U_{E,0}^*$. \qed

\end{document}